\providecommand{\U}[1]{\protect\rule{.1in}{.1in}}
\newtheorem{theorem}{Theorem}
\newtheorem{corollary}[theorem]{Corollary}
\newtheorem{definition}[theorem]{Definition}
\newtheorem{proposition}[theorem]{Proposition}
\newtheorem{remark}[theorem]{Remark}
\newenvironment{proof}[1][Proof]{\noindent\textbf{#1.} }{\ \rule{0.5em}{0.5em}}
\def \U {\overrightarrow U_{\!\!\lambda} }
\begin{document}

\title{A right inverse operator for $\operatorname{curl}+\lambda$ and applications}
\author{Briceyda B. Delgado, Vladislav V. Kravchenko$^{\text{*}}$\\{\small Regional mathematical center of Southern Federal University, }\\{\small Bolshaya Sadovaya, 105/42, Rostov-on-Don, 344006, Russia,}\\$^{\text{* on leave from Cinvestav, Mexico}}$\\{\small e-mail: briceydadelgado@gmail.com, vkravchenko@math.cinvestav.edu.mx
\thanks{Research was supported by CONACYT, Mexico via the project 284470.}}}
\maketitle

\begin{abstract}
A general solution of the equation $\operatorname{curl}\vec{w}+\lambda\vec
{w}=\overrightarrow{g},\,\lambda\in\mathbb{C},\,\lambda\neq0$ is obtained for
an arbitrary bounded domain $\Omega\subset\mathbb{R}^{3}$ with a Liapunov
boundary and $\overrightarrow{g}\in W^{p,\operatorname{div}}\left(
\Omega\right)  =\left\{  \overrightarrow{u}\in L^{p}\left(  \Omega\right)
:\,\operatorname{div}\overrightarrow{u}\in L^{p}\left(  \Omega\right)
,\,1<p<\infty\right\}  $. The result is based on the use of classical integral
operators of quaternionic analysis.

Applications of the main result are considered to a Neumann boundary value
problem for the equation $\operatorname{curl}\vec{w}+\lambda\vec
{w}=\overrightarrow{g}$ as well as to the nonhomogeneous time-harmonic Maxwell
system for achiral and chiral media.

\end{abstract}

\mbox{}\newline\noindent\textbf{Keywords:} Div-curl system, monogenic
functions, Helmholtz equation, metaharmonic\newline conjugate function,
Neumann boundary value problem, Maxwell's equations. \mbox{}\newline%
\noindent\textbf{Classification:} 30G20, 30G35, 35Q60.

\section{Introduction}

\label{sec:introduction} We study the nonhomogeneous equation
\begin{equation}
\operatorname{curl}\vec{w}+\lambda\vec{w}=\overrightarrow{g},\quad\lambda
\in\mathbb{C} \label{main eq Intro}%
\end{equation}
in a bounded domain $\Omega\subset\mathbb{R}^{3}$ and propose an integral
representation for its general solution or in other words we construct a
right-inverse operator for the operator $\operatorname{curl}+\lambda$ in an
appropriate functional space. In the situation when the boundary values of
$\vec{w}$ are given a representation of $\vec{w}$ is known (e.g., in the
context of quaternionic analysis it is obtained directly from the
Borel-Pompeiu formula \cite[pp. 59, 60]{Krav2003}). However, such
representation not always is convenient. It is clear, e.g., that the Newton
potential $L[u](\vec{x}):=-\frac{1}{4\pi}\int_{\Omega}\frac{{u(\vec{y})}%
}{\left\vert {\vec{x}-\vec{y}}\right\vert }{\,d\vec{y}}$ (the right inverse
operator for the Laplacian) is often used with no relation to concrete
boundary values of a solution to a Poisson equation.

The right-inverse operator for the operator $\operatorname{curl}+\lambda$ is
obtained by using a quaternionic approach. Equation (\ref{main eq Intro}) is
considered as a vector part of a quaternionic equation
\begin{equation}
(D+\lambda)\vec{w}=g \label{D+lambda Intro}%
\end{equation}
where $D$ is the Moisil-Teodorescu operator and $g_{0}=\operatorname*{Sc}g$ is
related with $\overrightarrow{g}$ by the equality $\operatorname{div}\vec
{g}+\lambda g_{0}=0$. A right inverse operator for $D+\lambda$ is well known
(see, e.g., \cite{GuSpr1990}, \cite{KravShap1996}, \cite{Krav2003}). Sometimes
it is called the Teodorescu transform and denoted by $T_{\lambda}$. However
$T_{\lambda}\left[  g\right]  $ is in general a complete quaternion (whose
scalar part is not necessarily zero), and, of course, simply resting a scalar
part from $T_{\lambda}\left[  g\right]  $ does not lead to a solution of
(\ref{D+lambda Intro}). Thus, the main problem for constructing a right
inverse for $\operatorname{curl}+\lambda$ reduces to finding a right inverse
for the operator $\left(  D+\lambda\right)  V$ where $V$ is a projection
operator $Vw=\operatorname*{Vec}\left(  w\right)  =\vec{w}$. This problem we
solve in three steps. First, we introduce certain component operators
conforming $T_{\lambda}$ and study their properties. Second, we give a
complete solution to the problem of constructing so-called metaharmonic
conjugate functions, considering the $(D+\lambda)w=0$ (for a full quaternion
$w$) construct $\overrightarrow{w}$ from a given $w_{0}=\operatorname*{Sc}w$
and vice versa, given $\overrightarrow{w}$ find $w_{0}$. Finally, with the aid
of these results we find out what term should be rested from $T_{\lambda
}\left[  g\right]  $ in order that the resulting function still be a solution
of the equation $(D+\lambda)w=g$ at the same time being purely vectorial.

The outline of this paper is as follows. In Section \ref{sec:Background} is
given the notation and some basic results on equation (\ref{D+lambda Intro}).
In Section \ref{subsec:Integral operators} we introduce a decomposition of the
$\lambda$-Teodorescu transform and some properties of the component operators.
In Section \ref{sec:Metaharmonic} the procedure for constructing metaharmonic
conjugate functions is presented. It is worth mentioning that in the case
$\lambda\neq0$ it resulted to be far more elementary and explicit than in the
case $\lambda=0$ (for which we refer to \cite[Prop.\ 2.3]{DelPor2017} and
\cite{DelPor2018}). In Section \ref{sec:Solution} the main result of this work
is presented which consists in a general solution of (\ref{main eq Intro}) and
an explicit expression for the right inverse operator for the operator
$\operatorname{curl}+\lambda$. In the rest of the paper we show some
applications of this result. In Section \ref{Sect Neumann bvp} a Neumann
problem for (\ref{main eq Intro}) is reduced to a boundary integral equation.
In Section \ref{sec:Maxwell's systems} a general weak solution of the
nonhomogeneous time-harmonic Maxwell system is obtained. In Section
\ref{Sect bvp Maxwell} it is applied to a standard boundary value problem for
the Maxwell system, well studied in the homogeneous case (e.g., in
\cite{Colton1983}) but not in the nonhomogeneous situation. Finally, in
Section \ref{sec:Maxwell-chiral}\ a general weak solution of the
nonhomogeneous Maxwell system for chiral media is presented.

\section{Background for the $D+\lambda$ system}

Together with the equation
\begin{equation}
\operatorname{curl}\vec{w}+\lambda\vec{w}=\overrightarrow{g} \label{main eq}%
\end{equation}
it is convenient to consider a quaternionic equation whose vectorial part
coincides with (\ref{main eq}). We begin by introducing the necessary notations.

\label{sec:Background} Let $\Omega\subset\mathbb{R}^{3}$ be a bounded domain.
We are interested in functions $w=w_{0}+\vec{w}\colon\Omega\rightarrow
\mathbb{B}$ with $w_{0}=\operatorname{Sc}w$, $\vec{w}=\operatorname{Vec}w$,
where $\mathbb{B}$ denotes the algebra of biquaternions.

%$x=x_0+\sum_{i=1}^3{e_ix_i}=\Sc x+\Vec x$ with $x_i\in \R$; the
%imaginary units $e_1,e_2,e_3$ obey the standard laws of multiplication. Identifying the subspaces $\Sc \B$ and $\Vec \B$ with $\R$ and $\R^3$ respectively. Some references for higher-dimensional function theory and its applications which will be used here are \cite{Brackx1982,Gilbert1991,GuSpr1990,GuHaSpr2008,Krav2003}.

From now on $\vec{x}\in\mathbb{R}^{3}$. The Moisil-Teodorescu differential
operator $D$ (also known as the generalized Cauchy-Riemann or occasionally the
Dirac operator but in fact was introduced by R. W. Hamilton) is defined by
\[
D=e_{1}\partial_{1}+e_{2}\partial_{2}+e_{3}\partial_{3},
\]
where $\partial_{i}=\partial/\partial x_{i}$, $i=1,2,3$ and $e_{i}$ stand for
basic quaternionic units. We remind that in terms of the classical
differential operators of vector calculus the action of $D$ can be written as
\[
Dw=-\operatorname{div}\vec{w}+\operatorname{grad}w_{0}+\operatorname{curl}%
\vec{w},
\]
meaning that $\operatorname{Sc}\left(  Dw\right)  =-\operatorname{div}\vec{w}$
and $\operatorname{Vec}\left(  Dw\right)  =\operatorname{grad}w_{0}%
+\operatorname{curl}\vec{w}$.

%The following \cite{GuSpr1990,GuSpr1997} is a generalization of the %Leibniz rule:
%\begin{align}\label{regla_Leibniz}
%D[vw]=D[v] w+\overline{v} D[w]-2 \sum_{i=1}^3{v_i \partial_i w},
%\end{align}
%where $\overline{x}=\Sc x-\Vec x$ denotes quaternionic conjugation.

\begin{definition}
Let $w\in C^{1}(\Omega,\mathbb{B})$ and $\lambda\in\mathbb{C}$. We will say
that $w$ is $\lambda$-\textit{monogenic in }$\Omega$ if $w$ belongs to the
kernel of $D+\lambda$ \textit{in }$\Omega$.
\end{definition}

Or equivalently,%
\begin{equation}
(D+\lambda)w=0\iff\left\{
\begin{array}
[c]{rcl}%
\operatorname{div}\vec{w}\!\! & = & \!\!\lambda w_{0},\\
\operatorname{curl}\vec{w}+\lambda\vec{w}\!\! & = & \!\!-\operatorname{grad}%
w_{0}.
\end{array}
\right.  \label{system-alpha-monogenic}%
\end{equation}
When $\lambda=0$ the system (\ref{system-alpha-monogenic}) represents the
Moisil-Teodorescu system that defines the quaternionic monogenic functions
(see, e.g., \cite{GuSpr1990,GuHaSpr2008}).

If $w$ is $\lambda$-monogenic, it necessarily satisfies the \textit{Helmholtz
equation}
\begin{equation}
(\Delta+\lambda^{2})w=0. \label{eq:Helmholtz}%
\end{equation}

\begin{definition}
The purely vectorial $\lambda$-monogenic functions (when $\lambda\neq0$) are
called \textit{force-free fields} (or sometimes \textit{force-free magnetic
fields).}
\end{definition}

They satisfy the equation
\begin{equation}
\operatorname{curl}\vec{u}+\lambda\vec{u}=0 \label{eq:force-free-field}%
\end{equation}
which additionally implies that $\operatorname{div}\vec{u}=0$. Observe that
equation (\ref{eq:force-free-field}) implies that $\operatorname{curl}\vec
{u}\times\vec{u}=0$. Some references for the force-free fields are
\cite{Kress1977,Kress1978,Kress1981,Krav2005} and references therein.

We are especially interested in purely vectorial solutions of the
nonhomogeneous equation
\begin{equation}
(D+\lambda)\vec{w}=g \label{D+lambda}%
\end{equation}
where $g=g_{0}+\vec{g}\in L^{p}(\Omega,\mathbb{B})$, $1<p<\infty$. A purely
vectorial function $\vec{w}$ is a solution of (\ref{D+lambda}) iff it solves
the system of equations%
\begin{equation}
-\operatorname{div}\vec{w}=g_{0}, \label{system1}%
\end{equation}%
\begin{equation}
\operatorname{curl}\vec{w}+\lambda\vec{w}=\vec{g}. \label{system2}%
\end{equation}
The second equation of the system coincides with (\ref{main eq}) meanwhile the
first one is not independent. Indeed, application of $\operatorname{div}$ to
(\ref{system2}) leads to the compatibility condition $\operatorname{div}%
\vec{g}+\lambda g_{0}=0$. Defining the subspace of functions in $L^{p}%
(\Omega,\mathbb{B})$, $1<p<\infty$, where the system (\ref{system1}),
(\ref{system2}) is well-posed,
\[
\operatorname{Sol}_{\lambda}^{p}(\Omega):=\{g=g_{0}+\vec{g}\in L^{p}%
(\Omega,\mathbb{B})\colon\ \operatorname{div}\vec{g}+\lambda g_{0}=0\}
\]
we obtain that (\ref{main eq}) is equivalent to (\ref{D+lambda}) for
$g\in\operatorname{Sol}_{\lambda}^{p}(\Omega)$.

\begin{remark}
In the special case $\lambda=0$ the system (\ref{system1}), (\ref{system2})
reduces to the classical div-curl system. In \cite{DelPor2017} a general
solution for this first order partial differential system for star-shaped
domains was presented and in \cite{DelPor2018} for Lipschitz domains in
$\mathbb{R}^{3}$ with connected complement.
%We will extend the abovementioned results.
%In Sections \ref{sec:Solution in star-shaped domains} and \ref{sec:Solution in Lipschitz domains} we will extend the abovementioned first and second solutions, respectively.

\end{remark}

Using the well-known \textit{generalized Green's formulas} \cite{Dautray1985}
a weak characterization of the solutions of (\ref{system1}), (\ref{system2})
is
\begin{equation}
\int_{\Omega}{(\operatorname{grad}v_{0})\vec{w}\,d\vec{y}}=\int_{\Omega
}{(\lambda\vec{w}-g)v_{0}\,d\vec{y}},\quad\forall v_{0}\in W_{0}^{1,q}%
(\Omega),\quad1/p+1/q=1.\label{weak-solution}%
\end{equation}
In particular, by (\ref{weak-solution}) the elements of $\operatorname{Sol}%
_{\lambda}^{p}(\Omega)$ satisfy the equality
\[
\int_{\Omega}{(\operatorname{grad}v_{0})\cdot\vec{g}\,d\vec{y}}={\lambda}%
\int_{\Omega}{g_{0}v_{0}\,d\vec{y}},\quad\forall v_{0}\in W_{0}^{1,q}(\Omega).
\]

%\begin{proposition}\label{prop:link}
%Let $h\in \Sol_{\alpha}^p(\Omega)$. Then $\vec w$ is solution of \eqref{eq:system_1} with $g=e ^{\vec \alpha\cdot \vec x}h$ if and only if $e^{-\vec \alpha\cdot \vec x}\vec w$ is solution of \eqref{eq:system_2}.
%\end{proposition}
%\begin{proof}
%Let $\vec w$ be a solution of \eqref{eq:system_2}. First notice that
%$g\in \Sol_{\lambda}^p(\Omega)$:
%\begin{align*}
%\div \vec g&= e ^{\vec \alpha\cdot \vec x}\div \vec h+
%	 \grad (\vec \alpha \cdot \vec x) e^{\vec \alpha\cdot \vec x}
%	\cdot\vec h\\
%	 &=e ^{\vec \alpha\cdot \vec x}(-\lambdah_0-\vec \alpha \cdot \vec h)
%	+e ^{\vec \alpha\cdot \vec x} \vec \alpha \cdot \vec h\\
%	 &=-\lambdag_0.
%\end{align*}
%By the Leibniz rule \eqref{regla_Leibniz}
%\begin{align*}
%D[e^{-\vec \alpha\cdot \vec x}\vec w]&=e^{-\alpha\cdot \vec x}
%	 D[\vec w]-\vec \alpha e^{-\vec \alpha\cdot \vec x}\vec w\\
%&=e^{-\vec \alpha\cdot \vec x}(g-\lambda\vec w)-\vec \alpha e^{-\vec
%	 \alpha \cdot \vec x}\vec w.
%\end{align*}
%Thus $(D+\alpha)[e^{-\vec \alpha\cdot \vec x}\vec w]=h$. The reciprocal part is analogous.
%\end{proof}

%That is, there exists an isomorphism between the solutions of the systems \eqref{eq:system_1} and \eqref{eq:system_2}.
%Basically, Proposition \ref{prop:link} is a generalization of
%\cite[Prop. 3.22]{KravShap1996}, which relates the kernels of the %operators $D+\alpha$ and $D+\lambda$.

\section{Some integral operators}

\label{subsec:Integral operators}

In this section we study the component operators of the $\lambda$-Teodorescu transform.

Using the fact that $\theta(\vec{x})=-e^{i\lambda|\vec{x}|}/(4\pi|\vec{x}|)$
is a fundamental solution of the Helmholtz operator $\Delta+\lambda^{2}$,
corresponding fundamental solutions of the operators $D\pm\lambda$ are given
by \cite[Th.\ 3.16]{KravShap1996}, \cite{Krav2003}
\[
E_{\pm\lambda}(\vec{x})=\pm\lambda\theta(\vec{x})-\operatorname{grad}%
\theta(\vec{x})=\theta(\vec{x})\left(  \pm\lambda+\frac{\vec{x}}{|\vec{x}%
|^{2}}-i\lambda\frac{\vec{x}}{|\vec{x}|}\right)  ,\quad\vec{x}\in
\mathbb{R}^{3}\setminus\{0\}.
\]

The $\lambda$-\textit{Teodorescu transform} 
%and the $\lambda$-\textit{Cauchyoperator}
is defined as follows (see, e.g., \cite{Krav2003})
\begin{align}
T_{\lambda}[w](\vec{x})  &  =\int_{\Omega}E_{\lambda}(\vec{x}-\vec{y}%
)w(\vec{y})\,d\vec{y},\quad\vec{x}\in\mathbb{R}^{3}%
.\label{eq:operador_Teodorescu}
%F_{\lambda}[\varphi](\vec{x})  &  =-\int_{\partial\Omega}E_{\lambda}(\vec
%{x}-\vec{y})\overrightarrow{n}(\vec{y})\varphi(\vec{y})\,ds_{\vec{y}}%
%,\quad\vec{x}\in\mathbb{R}^{3}\setminus\partial\Omega,
%\label{eq:Cauchy_operator}%
\end{align}
%where $\overrightarrow{n}$ is the outward pointing normal vector to
%$\partial\Omega$.
%satisfies $F_{\partial\Omega}[\varphi]\in \M(\Omega)$.

Moreover, for every $w\in L^{p}(\partial\Omega,\mathbb{B})$, $1<p<\infty$,
$F_{\lambda}[w]$ is $\lambda$-monogenic and $T_{\lambda}$ is a right inverse
of $D+\lambda$.

\begin{proposition}
(\cite[Th.\ 4.14]{KravShap1996}) \label{prop:right-inverse} Let $\Omega$ be a
bounded domain with a Liapunov boundary, $w\in L^{p}(\Omega,\mathbb{B})$,
$p>1$. Then in the generalized sense
\[
(D+\lambda)T_{\lambda}[w](\vec{x})=w(\vec{x}),\quad\vec{x}\in\Omega.
\]

\end{proposition}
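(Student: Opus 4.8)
The plan is to exploit that $E_{\lambda}$ is a fundamental solution of $D+\lambda$ and to read the operator $T_{\lambda}$ in \eqref{eq:operador_Teodorescu} as a convolution, so that applying $D+\lambda$ collapses, at the level of distributions, to convolving the Dirac delta against $w$. The fundamental-solution property itself, namely $(D+\lambda)E_{\lambda}=\delta_{0}$ in $\mathcal{D}'(\mathbb{R}^{3})$ (with the units $e_{i}$ acting by left multiplication, matching the definition in which $w(\vec{y})$ stands to the right of $E_{\lambda}(\vec{x}-\vec{y})$), I would take as already known, since $E_{\lambda}=\lambda\theta-\operatorname{grad}\theta$ was recorded above precisely as the fundamental solution of $D+\lambda$.

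First I would extend $w$ by zero outside $\Omega$, writing $\tilde{w}=w\chi_{\Omega}\in L^{p}(\mathbb{R}^{3},\mathbb{B})$ with compact support, so that $T_{\lambda}[w]=E_{\lambda}\ast\tilde{w}$ on all of $\mathbb{R}^{3}$. Since $E_{\lambda}(\vec{x})=\theta(\vec{x})\left(\lambda+\frac{\vec{x}}{|\vec{x}|^{2}}-i\lambda\frac{\vec{x}}{|\vec{x}|}\right)$ carries a singularity no worse than $|\vec{x}|^{-2}$ at the origin, it is locally integrable on $\mathbb{R}^{3}$; together with the compact support and $L^{p}$-integrability of $\tilde{w}$, this makes the convolution well defined and locally integrable, so $T_{\lambda}[w]$ is at least a regular distribution on $\Omega$.

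Then I would compute $(D+\lambda)T_{\lambda}[w]$ in the generalized sense. Because one of the two factors, $\tilde{w}$, has compact support, the distributional rule for differentiating a convolution applies and gives
\[
(D+\lambda)\bigl(E_{\lambda}\ast\tilde{w}\bigr)=\bigl((D+\lambda)E_{\lambda}\bigr)\ast\tilde{w}=\delta_{0}\ast\tilde{w}=\tilde{w},
\]
and restricting to $\Omega$ yields $(D+\lambda)T_{\lambda}[w]=\tilde{w}=w$ there. Concretely, one verifies this by pairing against a test function $\varphi\in C_{0}^{\infty}(\Omega,\mathbb{B})$, transferring $D+\lambda$ onto $\varphi$ through integration by parts (with the appropriate formal transpose of $D$, the boundary terms vanishing since $\varphi$ is compactly supported), using Fubini to interchange the $\vec{x}$- and $\vec{y}$-integrations, and recognizing the resulting inner integral as $\langle(D+\lambda)E_{\lambda}(\cdot-\vec{y}),\varphi\rangle=\varphi(\vec{y})$, which leaves exactly $\int_{\Omega}w(\vec{y})\varphi(\vec{y})\,d\vec{y}$.

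I expect the main obstacle to be the rigorous justification of this identity under such low regularity: $E_{\lambda}$ is merely locally integrable and genuinely singular while $\tilde{w}$ is only $L^{p}$, so one cannot differentiate under the integral sign, and the quaternionic non-commutativity forces one to keep the units $e_{i}$ strictly to the left of $w$ throughout. The honest route is either to invoke the general theorem on convolution derivatives when one factor has compact support, or to argue by hand by excising a small ball $B_{\varepsilon}(\vec{x})$ around the singularity: away from the pole $(D+\lambda)E_{\lambda}=0$ pointwise, so the volume term drops out and only the surface contribution over $\partial B_{\varepsilon}$ survives, and the crux is to show that this surface contribution reproduces $w(\vec{x})$ as $\varepsilon\to0$, which is precisely the delta appearing in the fundamental-solution identity.
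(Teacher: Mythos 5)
The paper does not prove this proposition at all: it is imported verbatim as Theorem 4.14 of the cited monograph of Kravchenko and Shapiro, so there is no in-paper argument to compare against. Your proposal is the standard proof of that cited result and it is sound: $E_{\lambda}=-(D-\lambda)\theta$ together with the factorization $\Delta+\lambda^{2}=-(D+\lambda)(D-\lambda)$ gives $(D+\lambda)E_{\lambda}=(\Delta+\lambda^{2})\theta=\delta_{0}$ in $\mathcal{D}'(\mathbb{R}^{3})$, and since $E_{\lambda}\in L^{1}_{\mathrm{loc}}$ while $w\chi_{\Omega}$ is compactly supported and in $L^{p}$, the identity $(D+\lambda)(E_{\lambda}\ast\tilde{w})=\bigl((D+\lambda)E_{\lambda}\bigr)\ast\tilde{w}=\tilde{w}$ is legitimate. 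The two points you flag as delicate are indeed the only ones needing care, and both are manageable: the noncommutativity is handled by defining distributional derivatives componentwise with scalar test functions, so that $\langle Du,\varphi\rangle=-\sum_{i}e_{i}\langle u,\partial_{i}\varphi\rangle$ keeps every unit $e_{i}$ to the left of $w$ exactly as in the integral defining $T_{\lambda}$; and the excision of a ball $B_{\varepsilon}(\vec{x})$ with the surface term over $\partial B_{\varepsilon}$ producing $w(\vec{x})$ is precisely how the cited source establishes the quaternionic Borel--Pompeiu formula from which this statement follows. The only thing your write-up leaves implicit is what ``generalized sense'' means here; in the source it is the Sobolev statement that $T_{\lambda}$ maps $L^{p}(\Omega)$ into $W^{1,p}(\Omega)$ with the identity holding almost everywhere, which is slightly stronger than the purely distributional identity you prove, but your argument delivers exactly what the proposition as stated asserts.
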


\begin{proposition}
\label{prop:alpha-harmonic} Let $\Omega$ be a bounded domain with a Liapunov
boundary, $w\in L^{p}(\Omega,\mathbb{B})$, $1<p<\infty$. Then

\begin{enumerate}
\item[(i)] $\operatorname{Sc}T_{\lambda}[w]$ is a solution of the Helmholtz
equation \eqref{eq:Helmholtz} if and only if $w\in\operatorname{Sol}_{\lambda
}^{p}(\Omega)$.

\item[(ii)] $\operatorname{Vec}T_{\lambda}[w]$ is a solution of the Helmholtz
equation (\ref{eq:Helmholtz}) if and only if $\operatorname{curl}\vec
{w}+\nabla w_{0}=\lambda\vec{w}$.
\end{enumerate}
\end{proposition}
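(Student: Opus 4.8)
The plan is to exploit the right-inverse property of $T_{\lambda}$ together with the factorization of the Helmholtz operator. Write $u=T_{\lambda}[w]$; by Proposition \ref{prop:right-inverse} we have $(D+\lambda)u=w$ in the generalized sense. The crucial algebraic observation is that, since $e_{i}^{2}=-1$ and $e_{i}e_{j}=-e_{j}e_{i}$ for $i\neq j$, the Moisil--Teodorescu operator squares to $D^{2}=-\Delta$, so that $(D-\lambda)(D+\lambda)=D^{2}-\lambda^{2}=-(\Delta+\lambda^{2})$. Applying the constant-coefficient operator $D-\lambda$ to both sides of $(D+\lambda)u=w$ therefore yields the single identity
\[
(\Delta+\lambda^{2})\,T_{\lambda}[w]=\lambda w-Dw.
\]
This identity does all the work, and the two assertions are obtained by reading off its scalar and vector parts.

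Next I would decompose both sides. Using the vector-calculus form $Dw=-\operatorname{div}\vec{w}+\operatorname{grad}w_{0}+\operatorname{curl}\vec{w}$ recalled in Section \ref{sec:Background}, and noting that $\Delta+\lambda^{2}$ is a scalar operator and hence commutes with the projections $\operatorname{Sc}$ and $\operatorname{Vec}$, the identity above splits into
\[
(\Delta+\lambda^{2})\operatorname{Sc}T_{\lambda}[w]=\operatorname{div}\vec{w}+\lambda w_{0},\qquad(\Delta+\lambda^{2})\operatorname{Vec}T_{\lambda}[w]=\lambda\vec{w}-\operatorname{grad}w_{0}-\operatorname{curl}\vec{w}.
\]
For part (i), the left-hand side of the first equation vanishes precisely when $\operatorname{div}\vec{w}+\lambda w_{0}=0$, which is exactly the defining condition of $\operatorname{Sol}_{\lambda}^{p}(\Omega)$. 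For part (ii), the second equation shows that $\operatorname{Vec}T_{\lambda}[w]$ solves the Helmholtz equation if and only if $\lambda\vec{w}-\operatorname{grad}w_{0}-\operatorname{curl}\vec{w}=0$, i.e. $\operatorname{curl}\vec{w}+\nabla w_{0}=\lambda\vec{w}$. Because these are genuine identities rather than one-sided estimates, both equivalences hold in both directions with no additional hypotheses.

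The only point requiring care is the functional-analytic setting: since $w\in L^{p}(\Omega,\mathbb{B})$, the quantities $Dw$, $\Delta T_{\lambda}[w]$, and the Helmholtz equation itself must all be understood in the generalized (distributional) sense, and one must justify applying $D-\lambda$ to the generalized identity $(D+\lambda)T_{\lambda}[w]=w$. I expect this to be the main technical effort, though it is routine once Proposition \ref{prop:right-inverse} is in hand, since $D-\lambda$ is a constant-coefficient operator that acts on distributions and commutes with $D+\lambda$. Everything after that is purely algebraic bookkeeping of scalar and vector parts.
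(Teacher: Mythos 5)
Your proposal is correct and follows essentially the same route as the paper: apply $D-\lambda$ to the identity $(D+\lambda)T_{\lambda}[w]=w$, use the factorization $\Delta+\lambda^{2}=-(D-\lambda)(D+\lambda)$ to get $(\Delta+\lambda^{2})T_{\lambda}[w]=-(D-\lambda)w$, and separate scalar and vector parts. The paper's proof is exactly this computation, so there is nothing to add beyond the (routine) distributional justification you already flag.
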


\begin{proof}
Due to the factorization of the Helmholtz operator
\begin{equation}
\Delta+\lambda^{2}=-(D-\lambda)(D+\lambda) \label{fact}%
\end{equation}
and by Proposition \ref{prop:right-inverse}, we have that
\[
(\Delta+\lambda^{2})T_{\lambda}[w]=-(D-\lambda)[w]=\operatorname{div}\vec
{w}+\lambda w_{0}-\operatorname{curl}\vec{w}-\nabla w_{0}+\lambda\vec{w}%
\]
from where after separating the scalar and the vector parts the assertion follows.
\end{proof}

Following the decomposition of the Teodorescu transform for $\lambda=0$
\cite{DelPor2017} and using the relations $\operatorname{Sc}E_{\pm\lambda}%
=\pm\lambda\theta$ and $\operatorname{Vec}E_{\pm\lambda}=-\operatorname{grad}%
\theta$, let us define%
\begin{align*}
T_{0,\pm\lambda}[w](\vec{x})  &  =\int_{\Omega}\left[  \operatorname{Sc}%
E_{\pm\lambda}(\vec{x}-\vec{y})\,w_{0}(\vec{y})-\operatorname{Vec}%
E_{\pm\lambda}(\vec{x}-\vec{y})\cdot\vec{w}(\vec{y})\right]  \,d\vec{y}\\
&  =\int_{\Omega}\left[  \pm\lambda\theta(\vec{x}-\vec{y})w_{0}(\vec
{y})+\operatorname{grad}_{\vec{x}}\theta(\vec{x}-\vec{y})\cdot\vec{w}(\vec
{y})\right]  \,d\vec{y},
\end{align*}%
\begin{align*}
{\overrightarrow{T}}_{1,\pm\lambda}[w_{0}](\vec{x})  &  =\int_{\Omega
}\operatorname{Vec}E_{\pm\lambda}(\vec{x}-\vec{y})w_{0}(\vec{y})\,d\vec{y}\\
&  =-\int_{\Omega}\operatorname{grad}_{\vec{x}}\theta(\vec{x}-\vec{y}%
)w_{0}(\vec{y})\,d\vec{y}%
\end{align*}
and%
\begin{align*}
{\overrightarrow{T}}_{2,\pm\lambda}[\vec{w}](\vec{x})  &  =\int_{\Omega
}\left[  \operatorname{Sc}E_{\pm\lambda}(\vec{x}-\vec{y})\,\vec{w}(\vec
{y})+\operatorname{Vec}E_{\pm\lambda}(\vec{x}-\vec{y})\times\vec{w}(\vec
{y})\right]  \,d\vec{y}\\
&  =\int_{\Omega}\left[  \pm\lambda\theta(\vec{x}-\vec{y})\vec{w}(\vec
{y})-\operatorname{grad}_{\vec{x}}\theta(\vec{x}-\vec{y})\times\vec{w}(\vec
{y})\right]  \,d\vec{y}.
\end{align*}

Thus,
\begin{equation}
\label{decomposition}T_{\pm\lambda}[w]=T_{0,\pm\lambda}[w]+{\overrightarrow
{T}}_{\!\!1,\pm\lambda}[w_{0}]+\overrightarrow{T}_{\!\!2,\pm\lambda}[\vec{w}].
\end{equation}

Notice that ${\overrightarrow{T}}_{\!\!1,\lambda}={\overrightarrow{T}%
}_{\!\!1,-\lambda}$, meanwhile the difference for the other pair of operators
consists in a change of one sign. Under the hypothesis of Proposition
\ref{prop:alpha-harmonic}, we have that $T_{0,\lambda}[w]$ and
${\overrightarrow{T}}_{\!\!1,\lambda}[w_{0}]+\overrightarrow{T}_{\!\!2,\lambda
}[\vec{w}]$ are scalar and vector solutions of the Helmholtz equation
(\ref{eq:Helmholtz}), respectively.

Let us consider the Newton potential $L_{\lambda}\colon L^{p}(\Omega
)\rightarrow W^{2,p}(\Omega)$ defined by
\begin{equation}
L_{\lambda}[w](\vec{x})=\int_{\Omega}{\theta(\vec{x}-\vec{y})w(\vec{y}%
)\,d\vec{y}}\label{Newton-potential}%
\end{equation}
representing a right inverse for the Helmholtz operator $\Delta+\lambda^{2}$
(see, e.g., \cite[p. 155]{Babich et al 1964}). Using the fact that
$E_{\lambda}=-(D-\lambda)\theta$, we obtain the following relations.

\begin{proposition}
\label{prop:identities-Teodorescu} Let $\Omega$ be a bounded domain with a
Liapunov boundary and $w\in L^{p}(\Omega,\mathbb{B})$, $1<p<\infty$. Then%
\[
T_{0,\pm\lambda}[w](\vec{x})=\operatorname{div}L_{\lambda}[\vec{w}](\vec
{x})\pm\lambda L_{\lambda}[w_{0}](\vec{x}),
\]%
\[
{\overrightarrow{T}}_{1,\pm\lambda}[w_{0}](\vec{x})=-\operatorname{grad}%
L_{\lambda}[w_{0}](\vec{x})
\]
and%
\[
{\overrightarrow{T}}_{2,\pm\lambda}[\vec{w}](\vec{x})=-\operatorname{curl}%
L_{\lambda}[\vec{w}](\vec{x})\pm\lambda L_{\lambda}[\vec{w}](\vec{x}).
\]
Moreover, ${\overrightarrow{T}}_{1,\lambda}[w_{0}]$ is an irrotational vector
field and $-T_{0,\lambda}[\vec{w}]+\overrightarrow{T}_{\!\!2,\lambda}[\vec
{w}]\in\operatorname{Sol}_{\lambda}^{p}(\Omega)$.
\end{proposition}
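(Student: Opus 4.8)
The plan is to obtain all three displayed identities by recognizing each component operator as the Newton potential $L_{\lambda}$ followed by a single first-order differential operator applied under the integral sign, and then to read off the two structural assertions directly from the classical vector-calculus identities $\operatorname{curl}\operatorname{grad}\equiv 0$ and $\operatorname{div}\operatorname{curl}\equiv 0$. None of this requires the earlier propositions; it is self-contained from the definitions of $L_{\lambda}$ and of $T_{0,\pm\lambda}$, $\overrightarrow{T}_{1,\pm\lambda}$, $\overrightarrow{T}_{2,\pm\lambda}$ written in terms of $\theta$.

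First I would establish the three integral identities. The key observation is that in each kernel the data $w_{0}(\vec{y})$ and $\vec{w}(\vec{y})$ are constant with respect to the differentiation variable $\vec{x}$, so the elementary product rules give, pointwise in the kernel, $\operatorname{grad}_{\vec{x}}(\theta\,w_{0})=(\operatorname{grad}_{\vec{x}}\theta)\,w_{0}$, $\operatorname{div}_{\vec{x}}(\theta\,\vec{w})=(\operatorname{grad}_{\vec{x}}\theta)\cdot\vec{w}$, and $\operatorname{curl}_{\vec{x}}(\theta\,\vec{w})=(\operatorname{grad}_{\vec{x}}\theta)\times\vec{w}$. Substituting these into the three defining integrals and comparing with $\operatorname{div}L_{\lambda}[\vec{w}]$, $-\operatorname{grad}L_{\lambda}[w_{0}]$, and $-\operatorname{curl}L_{\lambda}[\vec{w}]$ matches the $\operatorname{grad}\theta$-terms term by term, while the scalar terms $\pm\lambda\theta w_{0}$ and $\pm\lambda\theta\vec{w}$ produce the summands $\pm\lambda L_{\lambda}[w_{0}]$ and $\pm\lambda L_{\lambda}[\vec{w}]$ at once.

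The one point that needs care, and which I expect to be the only genuine obstacle, is the justification of moving $\operatorname{grad}$, $\operatorname{div}$ and $\operatorname{curl}$ inside the integral when $w$ is merely in $L^{p}$. The cleanest route is to verify each identity first for $w\in C_{0}^{\infty}(\Omega)$, where $L_{\lambda}[w]$ is smooth and classical differentiation under the integral sign applies, and then to extend to $w\in L^{p}$ by density, invoking the continuity of $L_{\lambda}\colon L^{p}(\Omega)\to W^{2,p}(\Omega)$ together with the resulting $L^{p}$-boundedness of the component operators. Since the kernel $\operatorname{grad}\theta$ behaves like $|\vec{x}|^{-2}$ near the origin, it is locally integrable in $\mathbb{R}^{3}$, so these first derivatives need no principal-value interpretation and the density argument is routine; the content is bookkeeping rather than a singular-integral estimate.

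Finally, the two structural claims fall out immediately. Because $\overrightarrow{T}_{1,\lambda}[w_{0}]=-\operatorname{grad}L_{\lambda}[w_{0}]$ is a gradient, $\operatorname{curl}\overrightarrow{T}_{1,\lambda}[w_{0}]=-\operatorname{curl}\operatorname{grad}L_{\lambda}[w_{0}]=0$, so it is irrotational. For the membership in $\operatorname{Sol}_{\lambda}^{p}(\Omega)$, I would differentiate the vector part using the third identity: $\operatorname{div}\overrightarrow{T}_{2,\lambda}[\vec{w}]=-\operatorname{div}\operatorname{curl}L_{\lambda}[\vec{w}]+\lambda\operatorname{div}L_{\lambda}[\vec{w}]=\lambda\operatorname{div}L_{\lambda}[\vec{w}]$ by $\operatorname{div}\operatorname{curl}\equiv 0$. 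Applying the first identity to the purely vectorial argument $\vec{w}$ (so that $w_{0}=0$) gives $\operatorname{div}L_{\lambda}[\vec{w}]=T_{0,\lambda}[\vec{w}]$, whence $\operatorname{div}\overrightarrow{T}_{2,\lambda}[\vec{w}]=\lambda\,T_{0,\lambda}[\vec{w}]$. Writing the candidate element as $g_{0}+\vec{g}$ with $g_{0}=-T_{0,\lambda}[\vec{w}]$ and $\vec{g}=\overrightarrow{T}_{2,\lambda}[\vec{w}]$, this is precisely $\operatorname{div}\vec{g}+\lambda g_{0}=0$, the defining condition of $\operatorname{Sol}_{\lambda}^{p}(\Omega)$.
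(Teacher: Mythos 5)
Your proposal is correct and follows exactly the route the paper intends: the paper proves this proposition by the single remark that $E_{\pm\lambda}=-(D\mp\lambda)\theta$, so that $T_{\pm\lambda}[w]=-(D\mp\lambda)L_{\lambda}[w]$ after differentiating under the integral, and declares the rest ``straightforward''; your component-by-component computation, the density argument for $L^{p}$ data, and the derivation of the two structural claims from $\operatorname{curl}\operatorname{grad}=0$ and $\operatorname{div}\operatorname{curl}=0$ are precisely the omitted details. No discrepancies.
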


The proof is straightforward.

The result of Proposition \ref{prop:identities-Teodorescu} can be summarized
in the following expression for the $\lambda$-Teodorescu operator $T_{\lambda
}[w]=-(D-\lambda)L_{\lambda}[w]$.

\section{Construction of metaharmonic conjugate functions}

\label{sec:Metaharmonic}

\begin{definition}
We will say that a scalar function $w_{0}$ and a vector function $\vec{w}$ are
metaharmonic conjugate functions to each other in $\Omega$ if the biquaternion
valued function $w:=w_{0}+\vec{w}$ is $\lambda$-monogenic in $\Omega$, that is
$(D+\lambda)(w_{0}+\vec{w})=0$.
\end{definition}

In this section we propose a procedure for constructing $\vec{w}$ when $w_{0}$
is known and vice versa.

\begin{remark}
The construction of the conjugate functions in the case $\lambda=0$ in general
is not an elementary operation. For star-shaped domains it can be performed
explicitly based on a radial integral operator \cite[Prop.\ 2.3]{DelPor2017}
(see also \cite{Sud1979} for star-shaped domains in $\mathbb{H}$), meanwhile
for a more general domain the procedure is far less explicit and can be
defined in terms of the Cauchy integral operator requiring the inversion of a
boundary integral operator (see \cite[Appendix]{DelPor2018} where this
procedure was introduced for bounded Lipschitz domains with a connected
complement). Both methods were fundamental in the solution of the div-curl
system provided in their respective references.

It is worth mentioning that the construction of the metaharmonic conjugate
functions (for $\lambda\neq0$) is far more simple and can be performed
explicitly in general.
\end{remark}

\begin{theorem}
\label{th:metaharmonic} (i) Let $w_{0}$ be a solution of $(\Delta+\lambda
^{2})w_{0}=0$ in a domain $\Omega$. Then the function $\vec{w}:=-\frac
{1}{\lambda}\nabla w_{0}$ is its \textit{metaharmonic conjugate}. It is
defined up to an arbitrary purely vectorial solution of $(D+\lambda
)\overrightarrow{v}=0$.

(ii) Let
\begin{equation}
\vec{w}\in\operatorname{Ker}(\Delta+\lambda^{2})\left(  \Omega\right)
.\label{ker}%
\end{equation}
Then a necessary and sufficient condition for the existence of a $\lambda
$-monogenic function $w$ in $\Omega$ such that $\operatorname{Vec}w=\vec{w}$
is the equality
\begin{equation}
\operatorname{curl}(\operatorname{curl}+\lambda)\vec{w}%
=0,\label{curlcurl cond}%
\end{equation}
and if it is satisfied then the unique \textit{metaharmonic conjugate to
}$\vec{w}$ has the form $w_{0}=\frac{1}{\lambda}\operatorname{div}%
\overrightarrow{w}$.
\end{theorem}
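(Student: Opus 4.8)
The plan is to verify the claimed equivalences by exploiting the characterization (\ref{system-alpha-monogenic}) of $\lambda$-monogenicity, namely that $w=w_0+\vec{w}$ lies in $\Ker(D+\lambda)$ if and only if $\div\vec{w}=\lambda w_0$ and $\curl\vec{w}+\lambda\vec{w}=-\grad w_0$.

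For part (i), I would start by assuming $(\Delta+\lambda^2)w_0=0$ and setting $\vec{w}:=-\frac{1}{\lambda}\nabla w_0$. I would then simply check that the pair $(w_0,\vec{w})$ satisfies the two scalar/vector equations of (\ref{system-alpha-monogenic}). The vector equation $\curl\vec{w}+\lambda\vec{w}=-\grad w_0$ follows immediately since $\curl\grad w_0=0$, so $\curl\vec{w}=0$ and $\lambda\vec{w}=-\grad w_0$. The scalar equation $\div\vec{w}=\lambda w_0$ becomes $-\frac{1}{\lambda}\Delta w_0=\lambda w_0$, i.e. $\Delta w_0=-\lambda^2 w_0$, which is exactly the metaharmonicity hypothesis. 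For the non-uniqueness statement, if $w_0+\vec{w}$ and $w_0+\vec{w}'$ are both $\lambda$-monogenic with the same scalar part, their difference $\vec{v}:=\vec{w}-\vec{w}'$ satisfies $(D+\lambda)\vec{v}=0$ and is purely vectorial, which is the claimed ambiguity.

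For part (ii), the substantive direction is the necessity and the reconstruction of $w_0$. First I would establish necessity: if such a $\lambda$-monogenic $w$ exists with $\Vec w=\vec{w}$, then by (\ref{system-alpha-monogenic}) we have $\curl\vec{w}+\lambda\vec{w}=-\grad w_0$, and applying $\curl$ to both sides kills the right-hand side since $\curl\grad=0$, giving (\ref{curlcurl cond}). For sufficiency, assuming $\vec{w}\in\Ker(\Delta+\lambda^2)$ and (\ref{curlcurl cond}), I would define $w_0:=\frac{1}{\lambda}\div\vec{w}$ and verify that $w=w_0+\vec{w}$ solves both equations of (\ref{system-alpha-monogenic}). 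The scalar equation $\div\vec{w}=\lambda w_0$ holds by construction. The key is the vector equation: I need $\grad w_0=-(\curl\vec{w}+\lambda\vec{w})$, i.e. $\frac{1}{\lambda}\grad\div\vec{w}+\curl\vec{w}+\lambda\vec{w}=0$. Here I would invoke the vector identity $\grad\div\vec{w}=\Delta\vec{w}+\curl\curl\vec{w}$ together with the Helmholtz hypothesis $\Delta\vec{w}=-\lambda^2\vec{w}$, reducing the claim to $\frac{1}{\lambda}\curl\curl\vec{w}+\curl\vec{w}=0$, i.e. $\curl(\curl\vec{w}+\lambda\vec{w})=0$, which is precisely condition (\ref{curlcurl cond}).

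The main obstacle I anticipate is not any single algebraic step but ensuring that the vector identity $\grad\div=\Delta+\curl\curl$ is applied with the correct interplay between the Helmholtz hypothesis and (\ref{curlcurl cond}); the algebra must close exactly, leaving no residual term. Uniqueness of $w_0$ is then immediate, since any two scalar metaharmonic conjugates to the same $\vec{w}$ must agree via the scalar equation $w_0=\frac{1}{\lambda}\div\vec{w}$, which determines $w_0$ with no remaining freedom.
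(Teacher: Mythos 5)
Your proof is correct and follows essentially the same route as the paper: part (ii) coincides with the paper's argument (necessity via $\operatorname{curl}\operatorname{grad}=0$, sufficiency by setting $w_{0}=\frac{1}{\lambda}\operatorname{div}\vec{w}$ and combining the identity $\operatorname{grad}\operatorname{div}=\Delta+\operatorname{curl}\operatorname{curl}$ with the Helmholtz hypothesis (\ref{ker}) and condition (\ref{curlcurl cond})). The only cosmetic difference is in part (i), where the paper gets the conclusion in one line from the factorization $\Delta+\lambda^{2}=-(D-\lambda)(D+\lambda)$ by taking $w:=-\frac{1}{\lambda}(D-\lambda)w_{0}$, whereas you verify the two equations of (\ref{system-alpha-monogenic}) directly; the underlying computation is the same.
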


\begin{proof}
As a corollary of (\ref{fact}) we have that if $(\Delta+\lambda^{2})w_{0}=0$,
the function $w:=-\frac{1}{\lambda}\left(  D-\lambda\right)  w_{0}$ is a
solution of (\ref{system-alpha-monogenic}) and additionally
$\operatorname*{Sc}w=w_{0}$. Thus, (i) is proved.

Let us prove (ii). The necessity of (\ref{curlcurl cond}) follows directly
from (\ref{system-alpha-monogenic}). For the sufficiency, consider
$w_{0}=\frac{1}{\lambda}\operatorname{div}\overrightarrow{w}$. Hence the
scalar equation in (\ref{system-alpha-monogenic}) is satisfied. Application of
the gradient leads then to the equalities
\[
\operatorname{grad}w_{0}=\frac{1}{\lambda}\operatorname{grad}%
\operatorname{div}\overrightarrow{w}=\frac{1}{\lambda}\left(  \Delta
\overrightarrow{w}+\operatorname{curl}\operatorname{curl}\overrightarrow
{w}\right)  .
\]
Now, using (\ref{ker}) and (\ref{curlcurl cond}) we obtain
\[
\operatorname{grad}w_{0}=-\left(  \lambda\overrightarrow{w}%
+\operatorname{curl}\overrightarrow{w}\right)
\]
which is the vector equation in (\ref{system-alpha-monogenic}).
\end{proof}

\section{Solution of system (\ref{main eq Intro})}

\label{sec:Solution}

\begin{theorem}
\label{th:solution-div-curl} Let $\Omega$ be a bounded domain with a Liapunov
boundary, $\lambda\in\mathbb{C}$, $\lambda\neq0$ and $\overrightarrow{g}\in
W^{p,\operatorname{div}}\left(  \Omega\right)  $, $1<p<\infty$. Then a weak
general solution of the system (\ref{main eq}) is given by%
\begin{equation}
\vec{w}=\frac{1}{\lambda}\left(  \vec{g}-\operatorname{curl}\overrightarrow
{T}_{2,\lambda}[\vec{g}]\right)  +\vec{u}, \label{solution w}%
\end{equation}
where $\vec{u}$ is an arbitrary solution of (\ref{eq:force-free-field}).
\end{theorem}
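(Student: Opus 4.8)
The plan is to exploit the right-inverse property of the $\lambda$-Teodorescu transform together with the metaharmonic conjugation of Theorem~\ref{th:metaharmonic}. Since $\vec g\in W^{p,\div}(\Omega)$, the scalar function $g_{0}:=-\frac{1}{\lambda}\div\vec g$ lies in $L^{p}(\Omega)$, so that $g:=g_{0}+\vec g$ is a genuine element of $\Sol_{\lambda}^{p}(\Omega)$, its defining relation $\div\vec g+\lambda g_{0}=0$ holding by construction. By Proposition~\ref{prop:right-inverse}, $T_{\lambda}[g]$ is a weak solution of $(D+\lambda)w=g$; I would take this full biquaternion as the starting point and then correct it into a purely vectorial one whose vector equation is exactly (\ref{main eq}).

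The correction must remove the scalar part of $T_{\lambda}[g]$ without destroying the equation. Because $g\in\Sol_{\lambda}^{p}(\Omega)$, Proposition~\ref{prop:alpha-harmonic}(i) guarantees that $w_{0}^{*}:=\Sc T_{\lambda}[g]=T_{0,\lambda}[g]$ solves the Helmholtz equation, and Theorem~\ref{th:metaharmonic}(i) then produces the $\lambda$-monogenic biquaternion $h:=w_{0}^{*}-\frac{1}{\lambda}\nabla w_{0}^{*}$. Subtracting a $\lambda$-monogenic function leaves the right-hand side untouched, so
\[
\vec w:=T_{\lambda}[g]-h=\Vec T_{\lambda}[g]+\tfrac{1}{\lambda}\nabla w_{0}^{*}
\]
is purely vectorial and still satisfies $(D+\lambda)\vec w=g$, hence solves (\ref{main eq}).

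It remains to identify this $\vec w$ with the stated formula and to read off the general solution. Inserting the representations of Proposition~\ref{prop:identities-Teodorescu}, namely $\overrightarrow{T}_{1,\lambda}[g_{0}]=-\grad L_{\lambda}[g_{0}]$, $\overrightarrow{T}_{2,\lambda}[\vec g]=-\curl L_{\lambda}[\vec g]+\lambda L_{\lambda}[\vec g]$ and $T_{0,\lambda}[g]=\div L_{\lambda}[\vec g]+\lambda L_{\lambda}[g_{0}]$, the two $\grad L_{\lambda}[g_{0}]$ contributions cancel, and the remainder collapses by means of the vector identity $\curl\curl=\grad\div-\Delta$ together with $(\Delta+\lambda^{2})L_{\lambda}=\mathrm{id}$. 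Equivalently, and more cleanly, one verifies the formula directly: the same two ingredients give $(\curl+\lambda)\overrightarrow{T}_{2,\lambda}[\vec g]=\vec g-\grad\div L_{\lambda}[\vec g]$, and applying $\curl$ annihilates the gradient to leave $\curl(\curl+\lambda)\overrightarrow{T}_{2,\lambda}[\vec g]=\curl\vec g$; a one-line substitution then shows that $\frac{1}{\lambda}\bigl(\vec g-\curl\overrightarrow{T}_{2,\lambda}[\vec g]\bigr)$ satisfies $\curl\vec w+\lambda\vec w=\vec g$. Finally, the difference of any two solutions of (\ref{main eq}) solves the homogeneous equation $\curl\vec u+\lambda\vec u=0$, that is, is a force-free field, while any such $\vec u$ may be added back; this yields the general solution (\ref{solution w}).

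I expect the only genuinely delicate point to be that every equality above must be read in the weak (generalized) sense in which Proposition~\ref{prop:right-inverse} holds; the hypothesis $\vec g\in W^{p,\div}(\Omega)$ is precisely what makes $g_{0}\in L^{p}(\Omega)$ and thereby legitimizes forming $g\in\Sol_{\lambda}^{p}(\Omega)$ and applying $T_{\lambda}$. Justifying $\curl\curl=\grad\div-\Delta$ distributionally on $L_{\lambda}[\vec g]\in W^{2,p}(\Omega)$ is the one step that is more than bookkeeping; the rest is routine algebra.
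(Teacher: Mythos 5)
Your proposal is correct and follows essentially the same route as the paper: form $g=g_{0}+\vec g\in\Sol_{\lambda}^{p}(\Omega)$ with $g_{0}=-\frac{1}{\lambda}\div\vec g$, apply $T_{\lambda}$, and subtract the $\lambda$-monogenic completion $T_{0,\lambda}[g]-\frac{1}{\lambda}\nabla T_{0,\lambda}[g]$ of its scalar part (via Proposition~\ref{prop:alpha-harmonic}(i) and Theorem~\ref{th:metaharmonic}(i)), then identify the result with \eqref{solution w} using $\curl\overrightarrow{T}_{\!\!1,\lambda}[g_{0}]=0$. Your closing ``direct verification'' that $(\curl+\lambda)\overrightarrow{T}_{\!\!2,\lambda}[\vec g]=\vec g-\grad\div L_{\lambda}[\vec g]$ is a valid and slightly more self-contained way to finish than the paper's rearrangement of the vector part of $(D+\lambda)\vec w=g$, but it is a difference of bookkeeping, not of substance.
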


\begin{proof}
Let $g_{0}:=-\frac{1}{\lambda}\operatorname{div}\vec{g}$. Then $g:=g_{0}%
+\vec{g}\in\operatorname{Sol}_{\lambda}^{p}(\Omega)$. Using Proposition
\ref{prop:alpha-harmonic} (i), we have that $T_{0,\lambda}[g]$ is a solution
of the Helmholtz equation \eqref{eq:Helmholtz}. Thus, by Theorem
\ref{th:metaharmonic} (i),
\[
(D+\lambda)\left(  T_{0,\lambda}[g]-\frac{\operatorname{grad}T_{0,\lambda}%
[g]}{\lambda}\right)  =0.
\]
Consider
\begin{equation}
\vec{w}:=T_{\lambda}[g]-T_{0,\lambda}[g]+\frac{\operatorname{grad}%
T_{0,\lambda}[g]}{\lambda}+\vec{u}. \label{vect w}%
\end{equation}
Since $T_{\lambda}$ is a right inverse of $D+\lambda$ (Proposition
\ref{prop:right-inverse}), we obtain
\[
(D+\lambda)\vec{w}=(D+\lambda)T_{\lambda}[g]-(D+\lambda)\left(  T_{0,\lambda
}[g]-\frac{\operatorname{grad}T_{0,\lambda}[g]}{\lambda}\right)  =g.
\]
%The second equality in \eqref{eq:solution-Lipschitz}
Due to the decomposition \eqref{decomposition} of the Teodorescu transform,
the vector part of the last equality can be written as follows
\[
\operatorname{curl}\left({\overrightarrow{T}}_{\!\!1,\lambda}[g_{0}%
]+\overrightarrow{T}_{\!\!2,\lambda}[\vec{g}]\right)+\nabla T_{0,\lambda}[g]+\lambda \left({\overrightarrow{T}}_{\!\!1,\lambda}[g_{0}]+\overrightarrow{T}_{\!\!2,\lambda}[\vec{g}]\right)=\vec{g}.
\]
Thus,
\[
{\overrightarrow{T}}_{\!\!1,\lambda}[g_{0}]+\overrightarrow{T}_{\!\!2,\lambda
}[\vec{g}]+\frac{\operatorname{grad}T_{0,\lambda}[g]}{\lambda}=\frac
{1}{\lambda}\left(  \vec{g}-\operatorname{curl}\left({\overrightarrow{T}%
}_{\!\!1,\lambda}[g_{0}]+\overrightarrow{T}_{\!\!2,\lambda}[\vec{g}]\right)\right)
\]
or, equivalently,%
\[
T_{\lambda}[g]-T_{0,\lambda}[g]+\frac{\operatorname{grad}T_{0,\lambda}%
[g]}{\lambda}=\frac{1}{\lambda}\left(  \vec{g}-\operatorname{curl}%
\left({\overrightarrow{T}}_{\!\!1,\lambda}[g_{0}]+\overrightarrow{T}_{\!\!2,\lambda}[\vec{g}]\right)\right)  .
\]
Hence (\ref{vect w}) takes the form
\[
\vec{w}(\vec{x})=\frac{1}{\lambda}\left(\vec{g}-\operatorname{curl}%
\left({\overrightarrow{T}}_{\!\!1,\lambda}[g_{0}]+\overrightarrow{T}_{\!\!2,\lambda}[\vec{g}]\right)\right)  +\vec{u},
\]
By Proposition \ref{prop:identities-Teodorescu}, $\operatorname{curl}%
{\overrightarrow{T}}_{\!\!1,\lambda}[g_{0}]=0$, thus (\ref{solution w}) is obtained.
\end{proof}

\begin{corollary}
Under the hypothesis of Theorem \ref{th:solution-div-curl} the operator
\[
R_{\lambda}:=\frac{1}{\lambda}(I-\operatorname{curl}\overrightarrow
{T}_{\!\!2,\lambda}),
\]
is a right inverse of the operator $\operatorname{curl}+\lambda I$ on
$W^{p,\operatorname{div}}\left(  \Omega\right)  $, $1<p<\infty$.
\end{corollary}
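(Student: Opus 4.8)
The plan is to read the statement off directly from Theorem \ref{th:solution-div-curl}. That theorem already asserts that, for every $\vec{g}\in W^{p,\operatorname{div}}(\Omega)$, the weak general solution of $\operatorname{curl}\vec{w}+\lambda\vec{w}=\vec{g}$ is
\[
\vec{w}=\frac{1}{\lambda}\left(\vec{g}-\operatorname{curl}\overrightarrow{T}_{\!\!2,\lambda}[\vec{g}]\right)+\vec{u},
\]
with $\vec{u}$ ranging over all force-free fields. First I would isolate the particular solution obtained by taking $\vec{u}=0$, namely $R_{\lambda}[\vec{g}]=\frac{1}{\lambda}\left(\vec{g}-\operatorname{curl}\overrightarrow{T}_{\!\!2,\lambda}[\vec{g}]\right)$. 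Since the theorem guarantees that this single function is a weak solution of the equation, we have $(\operatorname{curl}+\lambda I)R_{\lambda}[\vec{g}]=\vec{g}$ for every admissible $\vec{g}$, and as this is precisely the right-inverse identity on $W^{p,\operatorname{div}}(\Omega)$, the claim follows.

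If instead one prefers a self-contained verification that does not merely quote the solution formula, I would apply $\operatorname{curl}+\lambda I$ to $R_{\lambda}[\vec{g}]$ and expand to obtain
\[
(\operatorname{curl}+\lambda I)R_{\lambda}[\vec{g}]=\vec{g}+\frac{1}{\lambda}\operatorname{curl}\vec{g}-\frac{1}{\lambda}\operatorname{curl}\operatorname{curl}\overrightarrow{T}_{\!\!2,\lambda}[\vec{g}]-\operatorname{curl}\overrightarrow{T}_{\!\!2,\lambda}[\vec{g}].
\]
The task is then to show that the last three terms cancel. I would substitute the representation $\overrightarrow{T}_{\!\!2,\lambda}[\vec{g}]=-\operatorname{curl}L_{\lambda}[\vec{g}]+\lambda L_{\lambda}[\vec{g}]$ from Proposition \ref{prop:identities-Teodorescu}, invoke the vector identity $\operatorname{curl}\operatorname{curl}=\operatorname{grad}\operatorname{div}-\Delta$ together with $\operatorname{div}\operatorname{curl}=0$ and the commutation of $\operatorname{curl}$ with $\Delta$, and finally use that $L_{\lambda}$ is a right inverse of $\Delta+\lambda^{2}$. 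Writing $C:=\operatorname{curl}L_{\lambda}[\vec{g}]$, the terms $\frac{1}{\lambda}\Delta C$, $\lambda C$, $\operatorname{grad}\operatorname{div}L_{\lambda}[\vec{g}]$ and $\Delta L_{\lambda}[\vec{g}]$ each appear twice with opposite signs and drop out, leaving exactly $\vec{g}$.

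The main point to stress is that there is no genuinely new obstacle here: all the substantive work has already been carried out in Theorem \ref{th:solution-div-curl}, and the corollary merely extracts and names the particular solution $R_{\lambda}$ corresponding to $\vec{u}=0$. The only items beyond a bare restatement are the bookkeeping of the curl-curl cancellation in the direct route and the observation that $\overrightarrow{T}_{\!\!2,\lambda}$ maps $W^{p,\operatorname{div}}(\Omega)$ into a space on which $\operatorname{curl}$ acts in the generalized sense, so that the identity $(\operatorname{curl}+\lambda I)R_{\lambda}=I$ is meaningful weakly; both are routine given the mapping properties of $L_{\lambda}$ recorded in Section \ref{subsec:Integral operators}.
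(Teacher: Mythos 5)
Your first argument is exactly the paper's (implicit) proof: the corollary is stated without proof because it follows immediately from Theorem \ref{th:solution-div-curl} by taking $\vec{u}=0$, which gives $(\operatorname{curl}+\lambda I)R_{\lambda}[\vec{g}]=\vec{g}$ for every $\vec{g}\in W^{p,\operatorname{div}}(\Omega)$. Your alternative direct verification via Proposition \ref{prop:identities-Teodorescu} and $\operatorname{curl}\operatorname{curl}=\operatorname{grad}\operatorname{div}-\Delta$ also checks out (the intermediate term $\operatorname{curl}\vec{g}$ exists only distributionally, but it cancels), so both routes are sound.
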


\begin{remark}
Theorem \ref{th:solution-div-curl} allows one to solve also slightly more
general systems of the form
\begin{equation}
\operatorname{curl}\overrightarrow{v}+\lambda\overrightarrow{v}+\nabla
\varphi\times\overrightarrow{v}=\vec{h} \label{sl more gen}%
\end{equation}
where $\varphi$ is an arbitrary continuously differentiable scalar function.
Indeed, $\overrightarrow{v}$ is a solution of (\ref{sl more gen}) iff
$\overrightarrow{w}:=e^{\varphi}\overrightarrow{v}$ satisfies (\ref{main eq})
with $\overrightarrow{g}=e^{\varphi}\vec{h}$.
\end{remark}

\section{A Neumann boundary value problem\label{Sect Neumann bvp}}

\label{sec:Neumann problem} Following \cite{Kress1981} we consider the Neumann
problem for the equation $\operatorname{curl}\vec{w}+\lambda\vec{w}=\vec{g}$
in a domain with a connected boundary belonging to the class $C^{2}$. The
problem consists in finding $\vec{w}\in C^{1}(\Omega)\cap C(\overline{\Omega
})$ such that%
\begin{equation}
\operatorname{curl}\vec{w}+\lambda\vec{w}=\vec{g},\quad\text{ in }\Omega,
\label{Neumann 1}%
\end{equation}%
\begin{equation}
\vec{w}|_{\partial\Omega}\cdot\overrightarrow{n}=\varphi_{0},\quad\text{ on
}\partial\Omega, \label{Neumann 2}%
\end{equation}
where $\vec{g}\in C^{1,\gamma}(\Omega)$, $\varphi_{0}\in C^{0,\gamma}%
(\partial\Omega)$ and $0<\gamma<1$. A necessary condition for the existence of
a solution is
\begin{equation}
\int_{\partial\Omega}[\vec{g}|_{\partial\Omega}\cdot\overrightarrow{n}%
-\lambda\varphi_{0}]\,ds_{\vec{y}}=0. \label{eq:condition}%
\end{equation}
%where $\eta$ denotes the outward unit vector normal to $\partial\Omega$.

The next definition was introduced in \cite{Kress1981} in order to study the
Neumann problem (\ref{Neumann 1}), (\ref{Neumann 2}) using the relation with
the Helmholtz equation.

\begin{definition}
We will say that $\lambda$ is regular with respect to the Neumann problem
(\ref{Neumann 1}), (\ref{Neumann 2}) if for all solutions $\vec{w}\in
C^{1}(\Omega)\cap C(\overline{\Omega})$, $a\in C^{2}(\Omega)\cap
C(\overline{\Omega})$ of the system of differential equations
\begin{align*}
\operatorname{curl}\vec{w}+\lambda\vec{w}  &  =\vec{g}+\operatorname{grad}a,\\
\Delta a+\lambda^{2}a  &  =0,\quad\text{ in }\Omega,
\end{align*}
satisfying the boundary conditions
\[
\vec{w}|_{\partial\Omega}\cdot\overrightarrow{n}=\varphi_{0},\quad
a|_{\partial\Omega}=0,\quad\text{ on }\partial\Omega,
\]
there follows $a\equiv0$ in $\Omega$.
\end{definition}

Using the operator $R_{\lambda}$ we transform the boundary value problem
(\ref{Neumann 1}), (\ref{Neumann 2}) into a Neumann problem for force-free fields.

\begin{theorem}
Let $\Omega\subset\mathbb{R}^{3}$ be a bounded domain with a $C^{2}$-boundary.
Let $\lambda$ be regular with respect to (\ref{Neumann 1}), (\ref{Neumann 2}).
Then a solution of the Neumann problem (\ref{Neumann 1}), (\ref{Neumann 2}) is
given by
\[
\vec{w}=R_{\lambda}[\vec{g}]+\vec{u},
\]
where
\begin{align}
\vec{u}(\vec{x}) &  =-\operatorname{grad}\int_{\partial\Omega}\theta(\vec
{x}-\vec{y})\psi_{0}(\vec{y})\,ds_{\vec{y}}+(\operatorname{curl}-\lambda
)\int_{\partial\Omega}\theta(\vec{x}-\vec{y})\vec{\psi}(\vec{y})\,ds_{\vec{y}%
},\quad\vec{x}\in\Omega,\nonumber\label{eq:solution-Neumann-reduced}\\
\psi_{0} &  =\varphi_{0}-R_{\lambda}[\vec{g}]\Bigg|_{\partial\Omega}%
\cdot\overrightarrow{n},\quad\text{ on }\partial\Omega,
\end{align}
and $\vec{\psi}=\vec{u}|_{\partial\Omega}\times\overrightarrow{n}$ is a
continuous solution of the boundary integral equation
\begin{align}
&  \frac{1}{2}\vec{\psi}(\vec{x})+\int_{\partial\Omega}\overrightarrow{n}%
(\vec{x})\times\left(  \lambda\theta(\vec{x}-\vec{y})\vec{\psi}(\vec
{y})-\operatorname{grad}_{\vec{x}}\theta(\vec{x}-\vec{y})\times\vec{\psi}%
(\vec{y})\right)  \,ds_{\vec{y}}\nonumber\label{eq:boundary-integral}\\
&  =\overrightarrow{n}(\vec{x})\times\int_{\partial\Omega}\operatorname{grad}%
_{\vec{x}}\theta(\vec{x}-\vec{y})\psi_{0}(\vec{y})\,ds_{\vec{y}},\quad\vec
{x}\in\partial\Omega.
\end{align}

\end{theorem}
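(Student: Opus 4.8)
The plan is to peel off the inhomogeneity with the right inverse $R_\lambda$, reducing (\ref{Neumann 1})--(\ref{Neumann 2}) to a Neumann problem for force-free fields, and then to represent the force-free part as the vector part of a quaternionic single-layer potential whose density solves (\ref{eq:boundary-integral}).

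First I would put $\vec u:=\vec w-R_\lambda[\vec g]$. By the Corollary to Theorem \ref{th:solution-div-curl} one has $(\operatorname{curl}+\lambda)R_\lambda[\vec g]=\vec g$, so (\ref{Neumann 1}) holds exactly when $(\operatorname{curl}+\lambda)\vec u=\operatorname{grad}a$ for some solution $a$ of $(\Delta+\lambda^2)a=0$, while (\ref{Neumann 2}) becomes the normal condition $\vec u|_{\partial\Omega}\cdot\overrightarrow{n}=\varphi_0-R_\lambda[\vec g]|_{\partial\Omega}\cdot\overrightarrow{n}=\psi_0$. Thus it suffices to produce a field $\vec u$ with normal trace $\psi_0$ that is force-free up to exactly the gradient term tolerated by the definition of a regular $\lambda$.

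Next I would identify the proposed $\vec u$ as the vector part of a single-layer $\lambda$-monogenic function. Put $W:=\int_{\partial\Omega}\theta(\vec x-\vec y)\bigl(\psi_0(\vec y)-\vec\psi(\vec y)\bigr)\,ds_{\vec y}$ and $\Phi:=-(D-\lambda)W$. Since $W$ is a single-layer potential it satisfies $(\Delta+\lambda^2)W=0$ in $\Omega$, so by the factorization (\ref{fact}) we get $(D+\lambda)\Phi=(\Delta+\lambda^2)W=0$, i.e.\ $\Phi$ is $\lambda$-monogenic. A short computation shows that $\operatorname{Vec}\Phi$ is precisely the displayed $\vec u$, and writing $a:=-\operatorname{Sc}\Phi$, the vector line of (\ref{system-alpha-monogenic}) reads $\operatorname{curl}\vec u+\lambda\vec u=\operatorname{grad}a$ with $(\Delta+\lambda^2)a=0$. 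Hence the pair $(\vec w,a)$ solves exactly the system in the definition of a regular $\lambda$. The boundary integral equation (\ref{eq:boundary-integral}) then arises by enforcing the boundary data through the classical jump relations for single-layer potentials: the $\tfrac12\vec\psi$ term is the jump of the tangential trace of the curl of the vector single layer, the remaining integrals being its principal-value companion together with the $-\lambda$ single layer, arranged so that simultaneously $\vec u\cdot\overrightarrow{n}=\psi_0$ and $a|_{\partial\Omega}=0$.

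Finally I would close the argument by Fredholm theory. On a $C^2$ boundary the left-hand operator of (\ref{eq:boundary-integral}) is $\tfrac12 I$ plus an integral operator with weakly singular kernel (the crucial point being that $\overrightarrow{n}(\vec x)\times\operatorname{grad}_{\vec x}\theta(\vec x-\vec y)$ loses one order of singularity on a $C^2$ surface), hence a compact perturbation of the identity on $C(\partial\Omega)$, so the Fredholm alternative applies. The hypothesis that $\lambda$ is regular is exactly what forces the homogeneous version of (\ref{eq:boundary-integral}) to admit only the trivial solution, giving a unique continuous $\vec\psi$ and, through the definition of regularity, the vanishing $a\equiv0$ that upgrades $\operatorname{curl}\vec w+\lambda\vec w=\vec g+\operatorname{grad}a$ to (\ref{Neumann 1}). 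I expect the main obstacle to be precisely this last transfer: turning the PDE-uniqueness statement packaged in ``regularity'' into injectivity of $\tfrac12 I$ plus the boundary operator, and verifying the jump relations and weak singularity carefully enough for the Fredholm machinery to apply, which is the technical heart inherited from \cite{Kress1981}.
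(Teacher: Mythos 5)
Your approach is essentially the paper's: subtract $R_{\lambda}[\vec{g}]$ to reduce (\ref{Neumann 1}), (\ref{Neumann 2}) to a Neumann problem for force-free fields with boundary datum $\psi_{0}$, and then solve that reduced problem by the quaternionic single-layer/boundary-integral method. The paper simply cites \cite[Th.\ 3.3]{Kress1981} for this second stage, whereas you sketch its internal proof; your sketch is consistent with how that result works (your $\Phi=-(D-\lambda)W$ is $\lambda$-monogenic by the factorization (\ref{fact}), its vector part is the displayed $\vec{u}$, and regularity of $\lambda$ is exactly what forces the scalar remainder $a$ to vanish), so this is an expansion rather than a deviation.

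The one substantive step you omit, and which the paper verifies explicitly, is that the reduced datum $\psi_{0}$ inherits the necessary solvability condition from (\ref{eq:condition}): a force-free field is divergence-free, so the reduced Neumann problem requires $\int_{\partial\Omega}\psi_{0}\,ds_{\vec{y}}=0$. The paper obtains this by writing $\int_{\partial\Omega}\lambda\psi_{0}\,ds_{\vec{y}}=\int_{\partial\Omega}\operatorname{curl}\overrightarrow{T}_{\!\!2,\lambda}[\vec{g}]|_{\partial\Omega}\cdot\overrightarrow{n}\,ds_{\vec{y}}=0$, using (\ref{eq:condition}) and the divergence theorem. Without this check the existence half of your Fredholm argument (equivalently, the hypotheses of Kress's theorem) is not available, since the condition is necessary for the reduced problem to be solvable at all. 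It is easily filled in, but it is a genuine missing step rather than a stylistic difference.
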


\begin{proof}
By Theorem \ref{th:solution-div-curl} we have that $\vec{w}=R_{\lambda}%
[\vec{g}]+\vec{u}$ satisfies (\ref{Neumann 1}), where $\vec{u}$ is an
arbitrary force-free field. Therefore solution of the Neumann problem
(\ref{Neumann 1}), (\ref{Neumann 2}) reduces to solution of the Neumann
problem for a force-free field with a corresponding boundary condition%
\begin{equation}
\operatorname{curl}\vec{u}+\lambda\vec{u}=0, \label{Neumann fff 1}%
\end{equation}%
\begin{equation}
\vec{u}|_{\partial\Omega}\cdot\overrightarrow{n}=\psi_{0}=\varphi
_{0}-R_{\lambda}[\vec{g}]\Bigg|_{\partial\Omega}\cdot\overrightarrow{n}.
\label{Neumann fff 2}%
\end{equation}
Since (\ref{eq:condition}) is assumed to be fulfilled we have that $\psi_{0}$
also satisfies \eqref{eq:condition},
\[
\int_{\partial\Omega}\lambda\psi_{0}\,ds_{\vec{y}}=\int_{\partial\Omega
}\operatorname{curl}\overrightarrow{T}_{\!\!2,\lambda}[\vec{g}]|_{\partial
\Omega}\cdot\overrightarrow{n}\,ds_{\vec{y}}=0
\]
(due to the divergence theorem). The rest of the proof consists in applying
the solution given in \cite[Th.\ 3.3]{Kress1981} to the Neumann problem for
force-free fields (\ref{Neumann fff 1}), (\ref{Neumann fff 2}).
\end{proof}

\section{Time-harmonic Maxwell's equations}

\label{sec:Maxwell's systems} Let us consider the \textit{time-harmonic
Maxwell equations}%
\begin{equation}
\operatorname{curl}\vec{H}=-i\omega\epsilon\vec{E}+\vec{j},\quad
\operatorname{div}\vec{H}=0, \label{Maxw 1}%
\end{equation}%
\begin{equation}
\operatorname{curl}\vec{E}=i\omega\mu\vec{H},\quad\operatorname{div}\vec
{E}=\frac{\rho}{\epsilon}, \label{Maxw 2}%
\end{equation}
for a homogeneous isotropic medium. The quantities $\epsilon$ and $\mu$ are
complex numbers. The \textit{wave number} $\lambda=\omega\sqrt{\epsilon\mu}$
is chosen such that $\operatorname{Im}\lambda\geq0$. The charge density and
the current density are related by the equality $\rho=\frac{1}{i\omega
}\operatorname{div}\vec{j}$. Some references to the theory of time-harmonic
Maxwell's equations are \cite{Colton1983,Colton1992,Muller1969}.

Following \cite{Krav1992} (see also \cite{KravShap1996} and \cite{Krav2003})
the Maxwell system can be diagonalized with the aid of a pair of purely
vectorial biquaternion valued functions $\vec{\varphi}:=-i\omega\epsilon
\vec{E}+\lambda\vec{H}$ and $\vec{\psi}:=i\omega\epsilon\vec{E}+\lambda\vec
{H}$, obtaining
\begin{equation}
(D-\lambda)\vec{\varphi}=\operatorname{div}\vec{j}+\lambda\vec{j}%
,\quad(D+\lambda)\vec{\psi}=-\operatorname{div}\vec{j}+\lambda\vec{j}.
\label{eq:Maxwell-quaternionic}%
\end{equation}
Notice that the pair of equations \eqref{eq:Maxwell-quaternionic} is
equivalent to the system (\ref{Maxw 1}), (\ref{Maxw 2}).

\begin{theorem}
\label{theo:Maxwell} Let $\Omega$ be a bounded domain with a Liapunov boundary
and $\vec{j}\in W^{p,\operatorname{div}}(\Omega)$, $1<p<\infty$. Then a
general weak solution of the time-harmonic Maxwell system (\ref{Maxw 1}),
(\ref{Maxw 2}) is given by
\begin{align}
\vec{E}  &  =\frac{1}{2i\omega\epsilon}\left(  2\vec{j}-\operatorname{curl}%
(\overrightarrow{T}_{\!\!2,\lambda}+\overrightarrow{T}_{\!\!2,-\lambda}%
)[\vec{j}]\right)  +\frac{1}{2i\omega\epsilon}\left(  \vec{u}-\vec{v}\right)
,\nonumber\label{eq:solution-Maxwell}\\
\vec{H}  &  =-\frac{1}{2\lambda}\left(  \operatorname{curl}(\overrightarrow
{T}_{\!\!2,\lambda}-\overrightarrow{T}_{\!\!2,-\lambda})[\vec{j}]\right)
+\frac{1}{2\lambda}\left(  \vec{u}+\vec{v}\right)  ,
\end{align}
where $\vec{u}$ and $\vec{v}$ are arbitrary force-free fields associated to
the wave numbers $\lambda$ and $-\lambda$, respectively. That is,
$(D+\lambda)\vec{u}=0$ and $(D-\lambda)\vec{v}=0$.
\end{theorem}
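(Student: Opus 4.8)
The plan is to exploit the diagonalization (\ref{eq:Maxwell-quaternionic}), which turns the Maxwell system into two \emph{decoupled} quaternionic equations, each of exactly the form to which Theorem \ref{th:solution-div-curl} applies. With $\vec{\varphi}=-i\omega\epsilon\vec{E}+\lambda\vec{H}$ and $\vec{\psi}=i\omega\epsilon\vec{E}+\lambda\vec{H}$, these are $(D-\lambda)\vec{\varphi}=\operatorname{div}\vec{j}+\lambda\vec{j}$ and $(D+\lambda)\vec{\psi}=-\operatorname{div}\vec{j}+\lambda\vec{j}$. The first step is to verify compatibility: writing the source of the $\vec{\psi}$-equation as $g_{0}+\vec{g}$ with $g_{0}=-\operatorname{div}\vec{j}$ and $\vec{g}=\lambda\vec{j}$, one has $\operatorname{div}\vec{g}+\lambda g_{0}=\lambda\operatorname{div}\vec{j}-\lambda\operatorname{div}\vec{j}=0$, so the source lies in $\operatorname{Sol}_{\lambda}^{p}(\Omega)$; the analogous computation with $-\lambda$ places the source of the $\vec{\varphi}$-equation in $\operatorname{Sol}_{-\lambda}^{p}(\Omega)$. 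Both sources belong to $W^{p,\operatorname{div}}(\Omega)$ because $\vec{j}$ does, so the hypotheses of Theorem \ref{th:solution-div-curl} are met in each case.

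Next I would apply Theorem \ref{th:solution-div-curl} to each equation separately. For the $\vec{\psi}$-equation the effective source is $\vec{g}=\lambda\vec{j}$; since $\overrightarrow{T}_{\!\!2,\lambda}$ is linear, the prefactor $1/\lambda$ cancels against $\lambda\vec{j}$ and the general purely vectorial solution is $\vec{\psi}=\vec{j}-\operatorname{curl}\overrightarrow{T}_{\!\!2,\lambda}[\vec{j}]+\vec{u}$, where $\vec{u}$ is an arbitrary force-free field with $(D+\lambda)\vec{u}=0$. Applying the theorem with $\lambda$ replaced by $-\lambda$ to the $\vec{\varphi}$-equation (whose vectorial part reads $\operatorname{curl}\vec{\varphi}+(-\lambda)\vec{\varphi}=\lambda\vec{j}$) gives $\vec{\varphi}=-\vec{j}+\operatorname{curl}\overrightarrow{T}_{\!\!2,-\lambda}[\vec{j}]+\vec{v}$, with $\vec{v}$ an arbitrary solution of $(D-\lambda)\vec{v}=0$.

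Finally I would invert the change of variables. Since $\lambda\neq0$ and $\omega\epsilon\neq0$, the relations defining $\vec{\varphi},\vec{\psi}$ are invertible, giving $\vec{H}=\frac{1}{2\lambda}(\vec{\varphi}+\vec{\psi})$ and $\vec{E}=\frac{1}{2i\omega\epsilon}(\vec{\psi}-\vec{\varphi})$. Substituting the two expressions above, the $\vec{j}$ terms combine to $2\vec{j}$ in $\vec{E}$ and cancel in $\vec{H}$, while the $\operatorname{curl}\overrightarrow{T}_{\!\!2,\pm\lambda}[\vec{j}]$ terms collect into the sum $\overrightarrow{T}_{\!\!2,\lambda}+\overrightarrow{T}_{\!\!2,-\lambda}$ for $\vec{E}$ and the difference $\overrightarrow{T}_{\!\!2,\lambda}-\overrightarrow{T}_{\!\!2,-\lambda}$ for $\vec{H}$, and the force-free fields appear as $\vec{u}-\vec{v}$ and $\vec{u}+\vec{v}$ respectively; this yields precisely (\ref{eq:solution-Maxwell}).

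The point requiring care — rather than a computational obstacle — is the word \emph{general}. Because the linear change of variables $(\vec{E},\vec{H})\leftrightarrow(\vec{\varphi},\vec{\psi})$ is a bijection and the diagonalized equations are decoupled, the general solution of the Maxwell system is exactly the image of the pair of general solutions of the two independent equations, so no solution is lost or spuriously introduced. One should also recall the reduction justifying the appeal to Theorem \ref{th:solution-div-curl}: solving the purely vectorial equation $(D\pm\lambda)\vec{\psi}=g$ is equivalent, via (\ref{system1})--(\ref{system2}), to solving only the curl-equation $\operatorname{curl}\vec{\psi}\pm\lambda\vec{\psi}=\vec{g}$, the scalar (divergence) equation being dependent precisely because the source lies in $\operatorname{Sol}_{\pm\lambda}^{p}(\Omega)$.
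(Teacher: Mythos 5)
Your proposal is correct and follows essentially the same route as the paper's own proof: diagonalize via $\vec{\varphi},\vec{\psi}$, check that the sources lie in $\operatorname{Sol}_{\mp\lambda}^{p}(\Omega)$, apply Theorem \ref{th:solution-div-curl} to each decoupled equation, and invert the linear change of variables. Your additional remarks on the compatibility computation and on why generality is preserved under the bijective change of variables are consistent with, and slightly more explicit than, the paper's argument.
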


\begin{proof}
Notice that if $\vec{j}\in W^{p,\operatorname{div}}(\Omega)$, then for the
right hand side of \eqref{eq:Maxwell-quaternionic} we have
$-\operatorname{div}\vec{j}+\lambda\vec{j}\in\operatorname{Sol}_{\lambda}%
^{p}(\Omega)$ and $\operatorname{div}\vec{j}+\lambda\vec{j}\in
\operatorname{Sol}_{-\lambda}^{p}(\Omega)$. Thus, by Theorem
\ref{th:solution-div-curl} we have that
\[
\vec{\varphi}=-\vec{j}+\operatorname{curl}\overrightarrow{T}_{\!\!2,-\lambda
}[\vec{j}]+\vec{v},\quad\vec{\psi}=\vec{j}-\operatorname{curl}\overrightarrow
{T}_{\!\!2,\lambda}[\vec{j}]+\vec{u},
\]
are weak solutions of \eqref{eq:Maxwell-quaternionic}, where $\vec{u}$ and
$\vec{v}$ are arbitrary force-free fields associated to the wave numbers
$\lambda$ and $-\lambda$, respectively.

Now \eqref{eq:solution-Maxwell} is obtained by noting that the pairs of vector
fields $(\vec{E},\vec{H})$ and $(\vec{\varphi},\vec{\psi})$\ are related by
the equations $2i\omega\epsilon\vec{E}=\vec{\psi}-\vec{\varphi}$ and
$2\lambda\vec{H}=\vec{\psi}+\vec{\varphi}$.
\end{proof}

\begin{remark}
Using Proposition \ref{prop:identities-Teodorescu} to compute $\overrightarrow
{T}_{\!\!2,\lambda}+\overrightarrow{T}_{\!\!2,-\lambda}$ and $\overrightarrow
{T}_{\!\!2,\lambda}-\overrightarrow{T}_{\!\!2,-\lambda}$ one can write
\eqref{eq:solution-Maxwell} in the form
\begin{align}
\vec{E}  &  =\frac{1}{i\omega\epsilon}\left(  \vec{j}+\operatorname{curl}%
\operatorname{curl}L_{\lambda}[\vec{j}]\right)  +\frac{1}{2i\omega\epsilon
}\left(  \vec{u}-\vec{v}\right)  ,\nonumber\label{eq:solution-Maxwell-3}\\
\vec{H}  &  =-\operatorname{curl}L_{\lambda}[\vec{j}]+\frac{1}{2\lambda
}\left(  \vec{u}+\vec{v}\right)  ,
\end{align}
where $L_{\lambda}$ is the right inverse of the operator $\Delta+\lambda^{2}$
defined in \eqref{Newton-potential}.

By (\ref{decomposition}), we have
\begin{align*}
(\overrightarrow{T}_{\!\!2,\lambda}+\overrightarrow{T}_{\!\!2,-\lambda}%
)[\vec{j}]  &  =-2\int_{\Omega}\operatorname{grad}_{\vec{x}}\theta(\vec
{x}-\vec{y})\times\vec{j}(\vec{y})\,d\vec{y},\\
(\overrightarrow{T}_{\!\!2,\lambda}-\overrightarrow{T}_{\!\!2,-\lambda}%
)[\vec{j}]  &  =2\lambda\int_{\Omega}\theta(\vec{x}-\vec{y})\vec{j}(\vec
{y})\,d\vec{y}.
\end{align*}
Therefore the weak solution \eqref{eq:solution-Maxwell} of the time-harmonic
Maxwell's system can be rewritten as follows
\begin{align}
\vec{E}  &  =\frac{1}{i\omega\epsilon}\left(  \vec{j}+\operatorname{curl}%
\int_{\Omega}\operatorname{grad}_{\vec{x}}\theta(\vec{x}-\vec{y})\times\vec
{j}(\vec{y})\,d\vec{y}\right)  +\frac{1}{2i\omega\epsilon}\left(  \vec{u}%
-\vec{v}\right)  ,\nonumber\label{eq:solution-Maxwell-2}\\
\vec{H}  &  =-\operatorname{curl}\left(  \int_{\Omega}\theta(\vec{x}-\vec
{y})\vec{j}(\vec{y})\,d\vec{y}\right)  +\frac{1}{2\lambda}\left(  \vec{u}%
+\vec{v}\right)  .
\end{align}
Compare with the solution given in \cite[p.\ 62]{Krav2003} where the boundary
values of $\vec{E}$ and $\vec{H}$ are assumed to be known.
\end{remark}

\section{Boundary value problems for the time-harmonic Maxwell
equations\label{Sect bvp Maxwell}}

\label{sec:Boundary-Maxwell} With the aid of Theorem \ref{theo:Maxwell} the
method of integral equations developed for boundary value problems for
homogeneous Maxwell equations (see, e.g., \cite[Ch.\ 4]{Colton1983}) can be
extended onto the nonhomogeneous equations. As an example, we study the
following boundary value problem. Find a solution of the Maxwell system
(\ref{Maxw 1}), (\ref{Maxw 2}) provided with the boundary condition
\begin{equation}
\vec{E}|_{\partial\Omega}\times\overrightarrow{n}=\vec{\varphi}.
\label{eq:boundary-Maxwell}%
\end{equation}
The system is considered in a bounded domain $\Omega$ with a $C^{2}$-boundary,
$\vec{j}\in C^{0,\gamma}(\overline{\Omega})$, $0<\gamma<1$ and $\vec{\varphi
}\in\mathcal{F}_{\text{Div}}(\partial\Omega):=\{\vec{\phi}\in C^{0,\gamma
}(\partial\Omega)\colon\operatorname{Div}\vec{\phi}\in C^{0,\gamma}%
(\partial\Omega)\}$. That is the surface divergence of $\vec{\varphi}$ (see
\cite[Def. 2.28]{Colton1983}) exists and belongs to the H\"{o}lder space
$C^{0,\gamma}(\partial\Omega)$.

Analogously to the procedure used in Section \ref{sec:Neumann problem} this
boundary value problem is transformed into a boundary value problem for a
homogeneous Maxwell system. Denote
\[
\vec{E}^{\ast}=i(\vec{u}-\vec{v}),\quad\vec{H}^{\ast}=\vec{u}+\vec{v},
\]
where $\vec{u}$ and $\vec{v}$ are arbitrary force-free fields from
\eqref{eq:solution-Maxwell}. Since $\operatorname{curl}(\vec{u}\pm\vec
{v})=-\lambda(\vec{u}\mp\vec{v})$, it is immediate that $(\vec{E}^{\ast}%
,\vec{H}^{\ast})$ satisfy the homogeneous time-harmonic Maxwell equations. By
Theorem \ref{theo:Maxwell} and \eqref{eq:solution-Maxwell-3}, the boundary
value problem (\ref{Maxw 1}), (\ref{Maxw 2}), \eqref{eq:boundary-Maxwell} is
equivalent to finding a pair of vector fields $(\vec{E}^{\ast},\vec{H}^{\ast
})\in C^{1}(\Omega)\cap C(\overline{\Omega})$ satisfying%
\begin{equation}
\operatorname{curl}\vec{E}^{\ast}+i\lambda\vec{H}^{\ast}=0,\quad
\operatorname{curl}\vec{H}^{\ast}-i\lambda\vec{E}^{\ast}=0, \label{Maxw hom}%
\end{equation}%
\begin{equation}
\vec{E}^{\ast}|_{\partial\Omega}\times\overrightarrow{n}=-2i\left(  \vec
{j}+\operatorname{curl}\operatorname{curl}L_{\lambda}[\vec{j}]\right)
\Bigg|_{\partial\Omega}\times\overrightarrow{n}-2\omega\epsilon\vec{\varphi}.
\label{eq:homo-Maxwell}%
\end{equation}
%It is well-known that the interior electric boundary value problem as well as the interior magnetic boundary value problem have at most one solution when $\Im \lambda>0$ under certain conditions \cite[Th. 4.16]{Colton1983}.

\begin{theorem}
Let $\Omega$ be a bounded domain with a $C^{2}$-boundary. Let $\vec{\varphi
}=\vec{E}|_{\partial\Omega}\times\overrightarrow{n}\in\mathcal{F}_{\text{Div}%
}(\partial\Omega)$ and $\vec{j}|_{\partial\Omega}\in\mathcal{F}_{\text{Div}%
}(\partial\Omega)$. If $\operatorname{Im}\lambda>0$, then there exists at most
one solution of the Maxwell boundary value problem (\ref{Maxw 1}),
(\ref{Maxw 2}), \eqref{eq:boundary-Maxwell}.
\end{theorem}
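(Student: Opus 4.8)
The plan is to prove uniqueness by the classical energy (Rayleigh quotient) method after reducing to the homogeneous system. Suppose $(\vec{E}_{1},\vec{H}_{1})$ and $(\vec{E}_{2},\vec{H}_{2})$ are two solutions of (\ref{Maxw 1}), (\ref{Maxw 2}), (\ref{eq:boundary-Maxwell}) with the same data $\vec{j}$ and $\vec{\varphi}$. By linearity the difference $\vec{E}:=\vec{E}_{1}-\vec{E}_{2}$, $\vec{H}:=\vec{H}_{1}-\vec{H}_{2}$ solves the \emph{homogeneous} system
\[
\operatorname{curl}\vec{H}=-i\omega\epsilon\vec{E},\qquad\operatorname{curl}\vec{E}=i\omega\mu\vec{H}\quad\text{in }\Omega,
\]
and, since the two solutions share the same tangential datum, it satisfies the homogeneous boundary condition $\vec{E}|_{\partial\Omega}\times\overrightarrow{n}=0$. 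It therefore suffices to show that this homogeneous problem admits only the trivial solution when $\operatorname{Im}\lambda>0$.

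First I would eliminate $\vec{H}$: substituting $\vec{H}=\frac{1}{i\omega\mu}\operatorname{curl}\vec{E}$ into the first equation yields the second-order equation $\operatorname{curl}\operatorname{curl}\vec{E}=\omega^{2}\epsilon\mu\,\vec{E}=\lambda^{2}\vec{E}$ in $\Omega$. Next I would apply the vector Green identity coming from $\operatorname{div}(\overline{\vec{E}}\times\operatorname{curl}\vec{E})=|\operatorname{curl}\vec{E}|^{2}-\overline{\vec{E}}\cdot\operatorname{curl}\operatorname{curl}\vec{E}$. Integrating over $\Omega$ and using the divergence theorem gives
\[
\int_{\Omega}|\operatorname{curl}\vec{E}|^{2}\,d\vec{x}-\int_{\partial\Omega}(\overline{\vec{E}}\times\operatorname{curl}\vec{E})\cdot\overrightarrow{n}\,ds=\lambda^{2}\int_{\Omega}|\vec{E}|^{2}\,d\vec{x}.
\]
The boundary integrand equals $\operatorname{curl}\vec{E}\cdot(\overrightarrow{n}\times\overline{\vec{E}})=-\operatorname{curl}\vec{E}\cdot\overline{(\vec{E}\times\overrightarrow{n})}$, which vanishes identically on $\partial\Omega$ by the homogeneous boundary condition, so that
\[
\int_{\Omega}|\operatorname{curl}\vec{E}|^{2}\,d\vec{x}=\lambda^{2}\int_{\Omega}|\vec{E}|^{2}\,d\vec{x}.
\]

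The decisive observation is that the left-hand side is real and nonnegative. Hence, if $\vec{E}\not\equiv0$, then $\lambda^{2}$ would equal the nonnegative real Rayleigh quotient $\int_{\Omega}|\operatorname{curl}\vec{E}|^{2}\,d\vec{x}\,/\,\int_{\Omega}|\vec{E}|^{2}\,d\vec{x}\ge0$. Writing $\lambda=\lambda_{1}+i\lambda_{2}$ with $\lambda_{2}=\operatorname{Im}\lambda>0$, one has $\operatorname{Im}(\lambda^{2})=2\lambda_{1}\lambda_{2}$; for $\lambda^{2}$ to be real we would need $\lambda_{1}=0$, whence $\lambda^{2}=-\lambda_{2}^{2}<0$, contradicting $\lambda^{2}\ge0$. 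Thus $\lambda^{2}\notin[0,\infty)$ whenever $\operatorname{Im}\lambda>0$, forcing $\int_{\Omega}|\vec{E}|^{2}\,d\vec{x}=0$, i.e. $\vec{E}\equiv0$, and then $\vec{H}=\frac{1}{i\omega\mu}\operatorname{curl}\vec{E}\equiv0$. Therefore $\vec{E}_{1}=\vec{E}_{2}$ and $\vec{H}_{1}=\vec{H}_{2}$, which is the asserted uniqueness.

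The main obstacle is the rigorous justification of the vector Green identity up to the boundary, since the fields are only assumed to lie in $C^{1}(\Omega)\cap C(\overline{\Omega})$ and $\operatorname{curl}\vec{E}$ need not a priori possess a trace on $\partial\Omega$. The standard remedy, which I would follow (cf.\ \cite[Ch.\ 4]{Colton1983}), is to integrate first over the interior parallel domains $\Omega_{h}=\{\vec{x}\in\Omega:\operatorname{dist}(\vec{x},\partial\Omega)>h\}$, on which the real-analytic interior solution is smooth, and then let $h\to0^{+}$, using the continuity of the tangential components of $\vec{E}$ and $\vec{H}$ up to $\partial\Omega$ together with the vanishing tangential trace of $\vec{E}$ to show that the boundary contribution tends to zero. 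Alternatively, the generalized Green formulas of \cite{Dautray1985} handle the traces directly in the appropriate functional space, and the remainder of the argument is unchanged.
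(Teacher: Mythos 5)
Your proof is correct, but it takes a genuinely different route from the paper's. The paper does not prove uniqueness directly: it uses the representation of Theorem \ref{theo:Maxwell} to reduce the nonhomogeneous problem (\ref{Maxw 1}), (\ref{Maxw 2}), (\ref{eq:boundary-Maxwell}) to the homogeneous problem (\ref{Maxw hom}), (\ref{eq:homo-Maxwell}), then spends its effort verifying that the transformed boundary datum $\vec{E}^{\ast}|_{\partial\Omega}\times\overrightarrow{n}$ lies in $\mathcal{F}_{\text{Div}}(\partial\Omega)$ (via the computation $\operatorname{Div}(\operatorname{curl}\operatorname{curl}L_{\lambda}[\vec{j}])=0$ on converging surface patches), and finally cites \cite[Th.\ 4.16]{Colton1983} for the uniqueness itself. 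You instead bypass the reduction entirely: for a pure uniqueness statement only the difference of two solutions matters, and that difference satisfies the homogeneous Maxwell system with vanishing tangential trace of $\vec{E}$, which your energy identity $\int_{\Omega}|\operatorname{curl}\vec{E}|^{2}=\lambda^{2}\int_{\Omega}|\vec{E}|^{2}$ kills whenever $\lambda^{2}\notin[0,\infty)$, in particular whenever $\operatorname{Im}\lambda>0$; your handling of the boundary regularity via the parallel domains $\Omega_{h}$ is the standard and adequate fix, since $\vec{H}=\frac{1}{i\omega\mu}\operatorname{curl}\vec{E}$ is continuous up to $\overline{\Omega}$ while $\overrightarrow{n}\times\overline{\vec{E}}$ tends to zero there. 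What each approach buys: your argument is self-contained, elementary, and notably does not use the hypotheses $\vec{\varphi},\ \vec{j}|_{\partial\Omega}\in\mathcal{F}_{\text{Div}}(\partial\Omega)$ at all (they are irrelevant to uniqueness); the paper's reduction and its $\mathcal{F}_{\text{Div}}$ verification are really the setup needed to invoke the full integral-equation machinery of \cite[Sec.\ 4.3]{Colton1983}, which would also yield existence for the nonhomogeneous problem --- essentially you have unpacked the proof of the cited Theorem 4.16 rather than quoting it.
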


\begin{proof}
Due to the reduction of the problem (\ref{Maxw 1}), (\ref{Maxw 2}),
\eqref{eq:boundary-Maxwell} to the homogeneous problem (\ref{Maxw hom}),
(\ref{eq:homo-Maxwell}) we just need to verify that $\vec{E}^{\ast}%
|_{\partial\Omega}\times\overrightarrow{n}$ belongs to $\mathcal{F}%
_{\text{Div}}(\partial\Omega)$ and then application of \cite[Sec.
4.3]{Colton1983} gives us the result. Let $\{S_{n}\}$ be a sequence of
surfaces contained in $\partial\Omega$ with boundaries $\partial S_{n}$ of
class $C^{2}$ that converges to the point $\vec{x}\in\partial\Omega$ in the
sense of \cite[Def. 2.28]{Colton1983} and $\overrightarrow{n}_{n}$ the outward
unit normal vector to $\partial S_{n}$. Then
\begin{align*}
\operatorname{Div}(\operatorname{curl}\operatorname{curl}L_{\lambda}[\vec
{j}])(\vec{x})  &  =\lim_{S_{n}\rightarrow\vec{x}}\frac{1}{|S_{n}|}%
\int_{\partial S_{n}}\overrightarrow{n}_{n}\cdot\operatorname{curl}%
\operatorname{curl}L_{\lambda}[\vec{j}]\,ds_{\vec{y}}\\
&  =\lim_{S_{n}\rightarrow\vec{x}}\frac{1}{|S_{n}|}\int_{S_{n}}%
\operatorname{div}(\operatorname{curl}\operatorname{curl}L_{\lambda}[\vec
{j}])\,d\vec{y}=0.
\end{align*}
Hence by the regularity of $\vec{\varphi}$ and $\vec{j}$ we obtain that
$\vec{E}^{\ast}|_{\partial\Omega}\times\overrightarrow{n}\in\mathcal{F}%
_{\text{Div}}(\partial\Omega)$. See \cite[Th. 4.16]{Colton1983} for the
uniqueness of the solution.
\end{proof}

\section{Maxwell's equations in chiral media}

\label{sec:Maxwell-chiral} In this section we apply the general solution of
the system \eqref{system1} provided by Theorem \ref{th:solution-div-curl} to
Maxwell's equations in chiral media. The concept of chirality has played an
important role in chemistry, optics, among others fields (see, e.g.,
\cite{Jaggar1979,Lakhta1989}). Consider the corresponding Maxwell equations
\begin{align}
\operatorname{curl}\overline{E}  &  =i\omega\mu(\overline{H}+\beta
\operatorname{curl}\overline{H}),\label{eq:Maxwell-chiral}\\
\operatorname{curl}\overline{H}  &  =-i\omega\epsilon(\overline{E}%
+\beta\operatorname{curl}\overline{E})+\overline{j},\nonumber\\
\operatorname{div}\overline{E}  &  =\frac{\rho}{\epsilon},\quad
\operatorname{div}\overline{H}=0,\nonumber
\end{align}
where $\beta$ is the chirality measure of the medium. Following the notation
and results of \cite{Kira2001,Krav2003}, we denote
\begin{equation}
\overline{E}=-\sqrt{\mu}\vec{E},\quad\overline{H}=\sqrt{\epsilon}\vec{H}%
\quad\text{and\quad}\overline{j}=\sqrt{\epsilon}\vec{j}
\label{Notation vectors}%
\end{equation}
and consider the following purely vectorial biquaternion valued functions
\[
\vec{\varphi}=\vec{E}+i\vec{H},\quad\vec{\psi}=\vec{E}-i\vec{H}.
\]
Then the system \eqref{eq:Maxwell-chiral} can be written in a biquaternionic
form as follows
\begin{align}
\left(  D+\frac{\lambda}{1+\lambda\beta}\right)  \vec{\varphi}  &  =\frac
{\rho}{\epsilon\sqrt{\mu}}+\frac{i\vec{j}}{1+\lambda\beta}%
,\label{eq:chiral-biquaternionic-1}\\
\left(  D-\frac{\lambda}{1-\lambda\beta}\right)  \vec{\psi}  &  =\frac{\rho
}{\epsilon\sqrt{\mu}}-\frac{i\vec{j}}{1-\lambda\beta},
\end{align}
where $\lambda=\omega\sqrt{\epsilon\mu}$. Or equivalently,
\begin{equation}
\left(  D+\alpha_{1}\right)  \vec{\varphi}=\frac{i}{\lambda}%
(-\operatorname{div}\vec{j}+\alpha_{1}\vec{j}),\quad\left(  D-\alpha
_{2}\right)  \vec{\psi}=-\frac{i}{\lambda}(\operatorname{div}\vec{j}%
+\alpha_{2}\vec{j}), \label{D+alpha1}%
\end{equation}
where the new wave numbers $\alpha_{1}=\lambda/(1+\lambda\beta)$ and
$\alpha_{2}=\lambda/(1-\lambda\beta)$ physically correspond to the propagation
of waves of opposing circular polarizations. This reduction to of
(\ref{eq:Maxwell-chiral}) to a couple of equations for sometimes called
Beltrami fields is often used in relation with electromagnetics in chiral
media (see, e.g., \cite{Athanasiadis}).

\begin{theorem}
Let $\Omega$ be a bounded domain with a Liapunov boundary and $\vec{j}\in
W^{p,\operatorname{div}}(\Omega)$, $1<p<\infty$. Then a general weak solution
of the Maxwell system \eqref{eq:Maxwell-chiral} is given by
\begin{align}
\overline{E}  &  =-\frac{i}{\omega\epsilon}\left(  2\overline{j}-\alpha
_{1}\operatorname{curl}L_{\alpha_{1}}[\overline{j}]+\alpha_{2}%
\operatorname{curl}L_{\alpha_{2}}[\overline{j}]+\operatorname{curl}%
\operatorname{curl}(L_{\alpha_{1}}+L_{\alpha_{2}})[\overline{j}]\right)
\nonumber\label{eq:solution-Maxwell-chiral}\\
&  -\sqrt{\mu}(\vec{u}_{\alpha_{1}}+\vec{u}_{-\alpha_{2}}),\nonumber\\
\overline{H}  &  =-\frac{1}{\lambda}\left(  \alpha_{1}\operatorname{curl}%
L_{\alpha_{1}}[\overline{j}]+\alpha_{2}\operatorname{curl}L_{\alpha_{2}%
}[\overline{j}]-\operatorname{curl}\operatorname{curl}(L_{\alpha_{1}%
}-L_{\alpha_{2}})[\overline{j}]\right) \nonumber\\
&  +\frac{\sqrt{\epsilon}}{i}(\vec{u}_{\alpha_{1}}-\vec{u}_{-\alpha_{2}}),
\end{align}
where $\vec{u}_{\alpha_{1}}$ and $\vec{u}_{-\alpha_{2}}$ are arbitrary
force-free fields associated to the wave numbers $\alpha_{1}$ and $-\alpha
_{2}$, respectively.
\end{theorem}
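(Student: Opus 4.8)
The plan is to follow verbatim the strategy used for Theorem \ref{theo:Maxwell}: apply the right-inverse result Theorem \ref{th:solution-div-curl} to each of the two diagonalized equations \eqref{D+alpha1}, recombine the resulting Beltrami fields into $\overline{E}$ and $\overline{H}$, and then rewrite the component operators $\overrightarrow{T}_{2,\alpha_1}$ and $\overrightarrow{T}_{2,-\alpha_2}$ in terms of Newton potentials via Proposition \ref{prop:identities-Teodorescu}. First I would check that the two right-hand sides of \eqref{D+alpha1} lie in the spaces on which Theorem \ref{th:solution-div-curl} operates. For the first equation the right-hand side is $g=-\frac{i}{\lambda}\operatorname{div}\vec{j}+\frac{i\alpha_1}{\lambda}\vec{j}$, and the compatibility condition $\operatorname{div}(\operatorname{Vec}g)+\alpha_1\operatorname{Sc}g=0$ holds identically because $\vec{j}\in W^{p,\operatorname{div}}(\Omega)$; hence $g\in\operatorname{Sol}_{\alpha_1}^{p}(\Omega)$. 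The analogous verification places the right-hand side of the second equation in $\operatorname{Sol}_{-\alpha_2}^{p}(\Omega)$.

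Next, applying Theorem \ref{th:solution-div-curl} with wave number $\alpha_1$ to the first equation and with wave number $-\alpha_2$ to the second (the latter obtained by replacing $\lambda$ with $-\alpha_2$ in \eqref{solution w}), and using the linearity of $\overrightarrow{T}_{2,\pm}$ to pull out the scalar factors $\frac{i\alpha_1}{\lambda}$ and $-\frac{i\alpha_2}{\lambda}$, I expect the diagonal fields to come out as
\[
\vec{\varphi}=\frac{i}{\lambda}\left(\vec{j}-\operatorname{curl}\overrightarrow{T}_{2,\alpha_1}[\vec{j}]\right)+\vec{u}_{\alpha_1},\qquad\vec{\psi}=\frac{i}{\lambda}\left(\vec{j}-\operatorname{curl}\overrightarrow{T}_{2,-\alpha_2}[\vec{j}]\right)+\vec{u}_{-\alpha_2},
\]
where $\vec{u}_{\alpha_1}$ and $\vec{u}_{-\alpha_2}$ are the arbitrary force-free fields for $\alpha_1$ and $-\alpha_2$ supplied by the theorem.

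Then I would invert the change of variables. From $\vec{\varphi}=\vec{E}+i\vec{H}$ and $\vec{\psi}=\vec{E}-i\vec{H}$ one recovers $\vec{E}=\frac12(\vec{\varphi}+\vec{\psi})$ and $\vec{H}=\frac{1}{2i}(\vec{\varphi}-\vec{\psi})$, so that $\vec{E}$ picks up the combination $\overrightarrow{T}_{2,\alpha_1}+\overrightarrow{T}_{2,-\alpha_2}$ and $\vec{H}$ the combination $\overrightarrow{T}_{2,\alpha_1}-\overrightarrow{T}_{2,-\alpha_2}$. Finally I would pass to the physical fields through the substitutions \eqref{Notation vectors} (using $\lambda=\omega\sqrt{\epsilon\mu}$ to simplify the constants) and replace each component operator by its potential form from Proposition \ref{prop:identities-Teodorescu},
\[
\overrightarrow{T}_{2,\alpha_1}[\vec{j}]=-\operatorname{curl}L_{\alpha_1}[\vec{j}]+\alpha_1 L_{\alpha_1}[\vec{j}],\qquad\overrightarrow{T}_{2,-\alpha_2}[\vec{j}]=-\operatorname{curl}L_{\alpha_2}[\vec{j}]-\alpha_2 L_{\alpha_2}[\vec{j}];
\]
taking one more curl produces the $\operatorname{curl}\operatorname{curl}(L_{\alpha_1}\pm L_{\alpha_2})$ terms, and collecting all contributions yields \eqref{eq:solution-Maxwell-chiral}.

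The step I expect to be the only real obstacle is the bookkeeping: one must track two distinct wave numbers $\alpha_1$ and $-\alpha_2$ simultaneously through the sign conventions of $\overrightarrow{T}_{2,\pm}$ and Proposition \ref{prop:identities-Teodorescu}, and carry the scaling constants $\sqrt{\mu}$, $\sqrt{\epsilon}$ correctly so that the prefactor $\frac{1}{\omega\epsilon}$ and the correct sign pattern emerge. No new analytic idea is needed beyond what is already used in Theorem \ref{theo:Maxwell}; in particular the overall multiplicative constants attached to $\vec{u}_{\alpha_1}$ and $\vec{u}_{-\alpha_2}$ are immaterial, since these fields are arbitrary.
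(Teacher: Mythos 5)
Your proposal is correct and follows essentially the same route as the paper: verify that the right-hand sides of \eqref{D+alpha1} lie in $\operatorname{Sol}_{\alpha_{1}}^{p}(\Omega)$ and $\operatorname{Sol}_{-\alpha_{2}}^{p}(\Omega)$, apply Theorem \ref{th:solution-div-curl} with the two wave numbers to get $\vec{\varphi}$ and $\vec{\psi}$, recombine via $\vec{E}=\frac12(\vec{\varphi}+\vec{\psi})$, $\vec{H}=\frac{1}{2i}(\vec{\varphi}-\vec{\psi})$, and rewrite with Proposition \ref{prop:identities-Teodorescu} and \eqref{Notation vectors}. The only difference is cosmetic bookkeeping (the paper likewise absorbs harmless constant factors into the arbitrary force-free fields), so no further comment is needed.
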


\begin{proof}
Notice that if $\vec{j}\in W^{p,\operatorname{div}}(\Omega)$, then the $L^{p}$
functions on the right hand side of (\ref{D+alpha1}) belong to
$\operatorname{Sol}_{\alpha_{1}}^{p}(\Omega)$ and $\operatorname{Sol}%
_{-\alpha_{2}}^{p}(\Omega)$, respectively. Thus, by Theorem
\ref{th:solution-div-curl} we have that
\[
\vec{\varphi}=\frac{i}{\lambda}(\vec{j}-\operatorname{curl}\overrightarrow
{T}_{\!\!2,\alpha_{1}}[\vec{j}])+\vec{u}_{\alpha_{1}}\quad\vec{\psi}=\frac
{i}{\lambda}(\vec{j}-\operatorname{curl}\overrightarrow{T}_{\!\!2,-\alpha_{2}%
}[\vec{j}])+\vec{u}_{-\alpha_{2}},
\]
are weak solutions of (\ref{D+alpha1}), where $\vec{u}_{\alpha_{1}}$ and
$\vec{u}_{-\alpha_{2}}$ are arbitrary force-free fields associated to the wave
numbers $\alpha_{1}$ and $-\alpha_{2}$, respectively.

Since $\vec{E}=\frac{1}{2}\left(  \vec{\varphi}+\vec{\psi}\right)  $ and
$\vec{H}=\frac{1}{2i}(\vec{\varphi}-\vec{\psi})$ we have
\begin{align*}
\vec{E}  &  =\frac{i}{\lambda}\left(  2\vec{j}-\operatorname{curl}%
(\overrightarrow{T}_{\!\!2,\alpha_{1}}+\overrightarrow{T}_{\!\!2,-\alpha_{2}%
})[\vec{j}]\right)  +\vec{u}_{\alpha_{1}}+\vec{u}_{-\alpha_{2}},\\
\vec{H}  &  =-\frac{1}{\lambda}\operatorname{curl}(\overrightarrow
{T}_{\!\!2,\alpha_{1}}-\overrightarrow{T}_{\!\!2,-\alpha_{2}})[\vec{j}%
]+\frac{1}{i}(\vec{u}_{\alpha_{1}}-\vec{u}_{-\alpha_{2}}).
\end{align*}
Due to Proposition \ref{prop:identities-Teodorescu} and
(\ref{Notation vectors}) we obtain \eqref{eq:solution-Maxwell-chiral}.
\end{proof}

\section{Conclusions}

A right inverse operator for the operator $\operatorname{curl}+\lambda$ is
constructed for any bounded domain with a Liapunov boundary and as a corollary
a convenient representation for a general weak solution of equation
(\ref{main eq Intro}) is presented. Several applications of this result are
developed which include the nonhomogeneous time-harmonic Maxwell system for
achiral and chiral media. No doubt that the restriction on the smoothness of
the boundary can be weakened. Not less interesting would be obtaining an
analogous result for unbounded domains. The main result of the present work
admits a natural extension onto the $n$-dimensional situation with the aid of
analogous Clifford analysis' tools.

\end{document}